\newtheorem{fact}{Fact}
\newtheorem{claim}{Claim}
\newcommand{\ts}{\textstyle}
\DeclareMathOperator*{\argmax}{arg\,max}
\DeclareMathOperator*{\argmin}{arg\,min}
\DeclareMathOperator{\cost}{cost}
\newcommand{\cramped}{\vspace{-\topsep}\itemsep0pt}
\newcommand{\proofof}[1]{{\bfseries\upshape of #1\ }}
\title{Scenario Submodular Cover}
\thanks{Partially Supported by NSF Grant 1217968} \Email{ngrammel@nyu.edu}\\
\begin{document}

\maketitle

\begin{abstract}
  Many problems in Machine Learning can be modeled as submodular
  optimization problems.  Recent work has focused on stochastic or
  adaptive versions of these problems.  We consider the Scenario
  Submodular Cover problem, which is a counterpart to the Stochastic
  Submodular Cover problem studied by~\cite{golovinKrause}.  In
  Scenario Submodular Cover, the goal is to produce a cover with
  minimum expected cost, where the expectation is with respect to an
  empirical joint distribution, given as input by a weighted sample of
  realizations.  In contrast, in Stochastic Submodular Cover, the
  variables of the input distribution are assumed to be independent,
  and the distribution of each variable is given as input.  Building
  on algorithms developed by~\cite{cicaleseLaberSaettler}
  and~\cite{golovinKrause} for related problems, we give two
  approximation algorithms for Scenario Submodular Cover over discrete
  distributions.  The first achieves an approximation factor of
  $O(\log Qm)$, where $m$ is the size of the sample and $Q$ is the
  goal utility.  The second, simpler algorithm achieves an
  approximation bound of $O(\log QW)$, where $Q$ is the goal utility
  and $W$ is the sum of the integer weights.  (Both bounds assume an
  integer-valued utility function.)
Our results yield approximation bounds for other problems involving non-independent distributions
that are explicitly specified by their support.
\end{abstract}


\section{Introduction}
Many problems in Machine Learning can be modeled as submodular optimization problems.
Recent work has focused on stochastic or adaptive versions of submodular optimization problems, which 
reflect the need to make sequential decisions when outcomes are uncertain.

The Submodular Cover problem generalizes the classical NP-complete Set Cover
problem and is a fundamental problem in submodular optimization.
Adaptive versions of this problem have applications to a variety of machine learning problems that require building a decision tree, where the goal is to minimize expected cost. 
Examples include problems of entity identification (exact learning with membership queries),
classification (equivalence class determination), and decision region identification 
(cf.~\cite{golovinKrause, golovinKrauseRayNIPS, bellalaScottIT2012,javdaniDecision}). 
Other applications include reducing prediction costs for learned Boolean classifiers,
when there are costs for determining attribute values (\cite{deshpandeHellersteinKletenik}).

Previous work on the \emph{Stochastic} Submodular Cover problem assumes that the 
variables of the
input probability
distribution are independent.  Optimization is performed with respect to this distribution.
We consider a new version of the problem
that we call Scenario Submodular Cover, that removes the independence assumption.
In this problem, optimization is performed with respect to an input distribution
that is given explicitly by its support (with associated probability weights).
We give approximation algorithms solving the Scenario Submodular Cover problem over discrete
distributions.

Before describing our contributions in more detail, we give some background.
In generic terms,
an adaptive submodular cover problem is a sequential
decision problem where we must choose items one by one from an item set
$N=\{1,\dots,n\}$. Each item has an initially unknown state, which is
a member of a finite state set $\Gamma$.  The state of an item is
revealed only after we have chosen the item.  We represent a subset
$S$ of items and their states by a vector
$x \in (\Gamma \cup \{*\})^{n}$ where $x_i = *$ if $i \not\in S$, and
$x_i$ is the state of item $i$ otherwise.  We are given a monotone,
submodular
utility function
$g\colon(\Gamma \cup \{*\})^{n}\rightarrow \mathbb{Z}_{\geq 0}$.
It
assigns a non-negative integer value to each subset of the items and
the value can depend on the states of the items.\footnote{The definitions of the terms ``monotone'' and ``submodular,'' for state-dependent
utility functions, has not been standardized.  
We define these terms in Section~\ref{sec:defs}. 
In the terminology used by Golovin and
  Krause~\cite{golovinKrause}, $g$ is {\em pointwise} monotone and
  pointwise submodular.}
There is a non-negative goal utility value $Q$, such that
$g(a) = Q$ for all $a \in \Gamma^n$.  
There is a
cost associated with choosing each item, which we are given.
In distributional settings, we are also given the joint distribution of the item states.
We must continue
choosing items until their utility value is equal to the goal utility, $Q$.
The problem is to determine the adaptive order in which to choose the items so
as to minimize expected cost (in distributional settings) or worst-case cost (in adversarial settings).

Stochastic Submodular Cover is an adaptive submodular cover problem,  in a distributional setting.
In this problem, 
the state of each item is a random variable, and these variables are assumed to be independent.
The distributions of the variables are given as input.
Golovin and Krause introduced a simple
greedy algorithm for this problem, called Adaptive Greedy, that
achieves an approximation factor of $O(\log Q)$.  A dual greedy algorithm
for the problem, called
Adaptive Dual Greedy,
was presented and analyzed by~\cite{deshpandeHellersteinKletenik}. These greedy algorithms have
been useful in solving other stochastic optimization problems, which can
be reduced to Stochastic Submodular Cover through the construction of
appropriate utility functions (e.g.,
~\cite{javdaniDecision,javdaniVOI,deshpandeHellersteinKletenik,golovinKrauseRayNIPS}).

%

The problem we study in this paper, \emph{Scenario Submodular Cover}
(Scenario SC), is also a distributional, adaptive submodular cover
problem.  The distribution is given by a weighted sample, which is
provided as part of the input to the problem.  Each element of the
sample is a vector in $\Gamma^n$, representing an assignment of states
to the items in $N$.  Associated with each assignment is a positive
integer weight.  The sample and its weights define a joint
distribution on $\Gamma^n$, where the probability of a vector $\gamma$
in the sample is proportional to its weight.  (The probability of a
vector in $\Gamma^n$ that is not in the sample is 0.)  As in
Stochastic Submodular Cover, the problem is to choose the items and
achieve utility $Q$, in a way that minimizes the expected cost
incurred.  However, because many of the proofs of results for the
Stochastic Submodular Cover problem rely on the independence
assumption, the proofs do not apply to the Scenario SC problem.

\subsubsection*{Results}

We present an approximation algorithm for the Scenario SC
problem that we call {\em Mixed Greedy}. It uses two
different greedy criteria.  It is a generalization of an algorithm by~\cite{cicaleseLaberSaettler}
for the Equivalence Class Determination problem
(which has also been called the Group Identification problem and the
Discrete Function Evaluation problem).

The approximation factor achieved by Mixed Greedy for the Scenario
SC problem is $O\left(\frac{1}{\rho}\log Q\right)$, where $\rho$ is a quantity
that depends on the utility function $g$.  In the case of the utility
function constructed for the Equivalence Class Determination Problem,
$\rho$ is constant, but this is not true in general.  

We describe a
modified version of Mixed Greedy that we call {\em Scenario Mixed
Greedy}.  It works by first constructing a new monotone, submodular
utility function $g_S$ from $g$ and the sample, for which $\rho$ is
constant.  It then runs Mixed Greedy on $g_S$ with goal value $Qm$,
where $m$ is the size of the sample.  We show that Scenario Mixed Greedy
achieves an $O(\log Qm)$ approximation factor for any
Scenario SC problem.

Mixed Greedy is very similar to the algorithm of Cicalese et al., and
we use the same basic analysis.  However, at the heart of their
analysis is a technical lemma with a lengthy proof bounding a quantity
that they call the ``sepcost''.  The proof applies only to the
particular utility function used in the Equivalence Class
Determination problem.  We replace this proof with an entirely
different proof that applies to the general Scenario SC problem.
Our proof is based on the work of \cite{streeterGolovin} for the Min-Sum Submodular
Cover problem.

In addition to presenting and analyzing Mixed Greedy, we also present
another algorithm for the Scenario SC problem that we call 
\emph{Scenario Adaptive Greedy}.  It is a
modified version of the Adaptive Greedy algorithm of Golovin and
Krause.  Scenario Adaptive
Greedy is simpler and more efficient than Mixed Greedy, and is
therefore likely to be more useful in practice.  However, the
approximation bound proved by Golovin and Krause for Adaptive Greedy
depends on the assumption that $g$ and the distribution defined by the
sample weights jointly satisfy the {\em adaptive submodularity}
property.
This is not the case for general instances of the Scenario SC
problem.  We extend the approach used in constructing $g_S$ to give a
simple, generic method for constructing a modified utility function
$g_W$, with goal utility $QW$, from $g$, which incorporates the
weights on the sample.  We prove that utility function $g_W$ and the
distribution defined by the sample weights jointly satisfy adaptive
submodularity.  This allows us to apply the Adaptive Greedy algorithm,
and to achieve an approximation bound of $O(\log QW)$ for the
Scenario SC problem, where $W$ is the sum of the weights.

Our constructions of $g_S$ and $g_W$ are similar to constructions used
in previous work on Equivalence Class Determination and related
problems (cf.~\cite{golovinKrauseRayNIPS,bellalaScottIT2012,javdaniVOI,javdaniVOIlong}).
Our proof of adaptive submodularity uses the same basic
approach as used in previous
work (see, e.g.,~\cite{golovinKrauseRayNIPS,javdaniVOI,javdaniVOIlong}), namely
showing that the value of a certain function is non-decreasing along a
path between two points; however, we are addressing a more general
problem and the details of our proof are different.


We believe that our work on Adaptive Greedy should make it easier to
develop efficient approximation algorithms for sample-based problems
in the future.  Previously, using ordinary Adaptive Greedy to solve a
sample-based problem involved the construction of a utility function
$g$, and a proof that $g$, together with the distribution on the
weighted sample, was adaptive submodular.  The proof was usually the most
technically difficult part of the work (see, e.g.,~\cite{golovinKrauseRayNIPS,bellalaScottIT2012,javdaniDecision,javdaniVOIlong}).
Our construction of $g_W$, and our proof of adaptive submodularity,
make it possible to achieve an approximation bound using Adaptive
Greedy after proving only submodularity of a constructed $g$, rather
than adaptive submodularity of $g$ and the distribution.  Proofs of
submodularity are generally easier because they do not involve
distributions and expected values.  Also, the standard OR construction
described in Section~\ref{sec:defs} preserves submodularity, while it
does not preserve Adaptive Submodularity (\cite{javdaniVOI}).


Given a monotone, submodular $g$ with goal value $Q$, we can 
use the algorithms in this paper
to immediately obtain three approximation results for the associated
Scenario SC problem:
running Mixed Greedy with $g$ yields an
$O\left(\frac{1}{\rho}\log Q\right)$ approximation, running Mixed Greedy with
$g_S$ yields an $O(\log Qm)$ approximation, and running Adaptive
Greedy with $g_W$ yields an $O(\log QW)$ approximation.  
By the results of~\cite{golovinKrause}, 
running Adaptive Greedy with $g$ yields an
$O(\log Q)$ approximation for the associated Stochastic SC problem.

\subsubsection*{Applications}
Our results on Mixed Greedy yield approximation bounds for other
problems.  For example, we can easily obtain a new bound for the
Decision Region Identification problem studied by~\cite{javdaniDecision},
which is an extension of the Equivalence
Class Determination problem.  Javdani et al.~construct a utility
function whose value corresponds to a weighted sum of the hyperedges
cut in a certain hypergraph.  We can define a corresponding utility
function whose value is the \emph{number} of hyperedges cut.  This utility
function is clearly monotone and submodular.  Using Mixed Greedy with
this utility function yields an approximation bound of $O(k \log m)$,
where $k$ is a parameter associated with the problem, and $m$ is the
size of the input sample for this problem.  In contrast, the bound
achieved by Javdani et al.  is $O\left(k\log\left(\frac{W}{w_{min}}\right)\right)$, where
$w_{min}$ is the minimum weight on a assignment in the sample.

We can apply our greedy algorithms to Scenario BFE (Boolean
Function Evaluation) problems, which we introduce here.  These
problems are a counterpart to the Stochastic BFE problems\footnote{In the
  Operations Research literature, Stochastic Function Evaluation is
  often called Sequential Testing or Sequential Diagnosis.} 
that have been studied in AI, operations research, and in the context of learning with attribute costs
(see
e.g.,~\cite{unluyurtreview,deshpandeHellersteinKletenik,kaplanMansour-Stoc05}).
In
a Scenario BFE problem, we are given a Boolean function $f$.  For
each $i \in \{1,\ldots,n\}$, we are also given a cost $c_i > 0$
associated with obtaining the value of the $i$th bit of an 
initially unknown assignment
$a \in \{0,1\}^n$.  Finally, we are given a weighted sample
$S \subseteq \{0,1\}^n$.  The problem is to compute a (possibly
implicit) decision tree computing $f$, such that the expected cost of
evaluating $f$ on $a \in \{0,1\}^n$, using the tree, is minimized.
The expectation is with respect to the distribution defined by the
sample weights.

\cite{deshpandeHellersteinKletenik} gave 
approximation algorithms for some Stochastic BFE
problems that work by constructing an appropriate monotone, submodular utility function
$g$ and running Adaptive Greedy.  
By substituting the sample-based algorithms in this paper in place of Adaptive Greedy,
we obtain approximation results for analogous Scenario BFE problems.
For example, using Mixed Greedy, we can show that the
Scenario BFE problem for $k$-of-$n$ functions has an approximation
algorithm achieving a factor of $O(k \log n)$ approximation,
independent of the size of the sample.  Details are in
Appendix~\ref{sec:rhofork}.  Bounds for other functions follow easily
using Scenario Mixed Greedy and Scenario Adaptive Greedy.  For
example, \cite{deshpandeHellersteinKletenik} presented
an algorithm achieving an $O(\log t)$ approximation for the Stochastic
BFE problem for evaluating decision trees of size $t$.  Substituting
Scenario Mixed Greedy for Adaptive Greedy in this algorithm yields
an $O(\log tm)$ approximation for the associated Scenario BFE
problem.


We note that our Scenario BFE problem differs from the 
function evaluation problem 
by~\cite{cicaleseLaberSaettler}.
In their problem, the
computed decision tree need only compute $f$ correctly on assignments
$a \in \{0,1\}^n$ that are in the sample, while ours needs to compute $f$ correctly on all
$a \in \{0,1\}^n$.  To see the difference, consider the problem of
evaluating the Boolean OR function, for a sample $S$ consisting of only
$a \in \{0,1\}^n$ with at least one 1.  If the tree only has to
be correct on $a \in S$, a one-node decision tree that immediately
outputs $1$ is valid, even though it does not compute the OR function.
Also, in Scenario BFE we assume that the function $f$ is given with the sample, and we
consider particular types of functions $f$.

\subsubsection*{Organization}
We begin with definitions in Section~\ref{sec:defs}. 
In Section~\ref{sec:MG}, we present
the overview of the Mixed Greedy algorithm. 
Finally, we present
Scenario Mixed Greedy
in Section~\ref{sec:SMG}, followed by Scenario Adaptive Greedy in Section~\ref{sec:SAG}.

\section{Definitions}
\label{sec:defs}
Let $N = \{1,\dots, n\}$ be the set of \textit{items} and $\Gamma$ be
a finite set of states. A \textit{sample} is a subset of $\Gamma^n$.
A \textit{realization} of the items is an element $a \in \Gamma^{n}$,
representing an assignment of states to items,
where for $i \in N$, $a_i$ represents the state of item $i$.
We also refer to an
element of $\Gamma^n$ as an \textit{assignment}.

We call $b\in\left(\Gamma \cup \left\{*\right\}\right)^{n}$ a
\textit{partial} realization.  Partial realization $b$ represents the
subset of items $I=\{i \mid b_i \neq *\}$ where each item $i \in I$ has
state $b_i$.  For $\gamma \in \Gamma$, the quantity
$b_{i\leftarrow \gamma}$ denotes the partial realization that is
identical to $b$ except that $b_{i}=\gamma$. For partial realizations
$b, b'\in \left(\Gamma\cup \{*\}\right)^{n}$, $b'$ is an
\textit{extension} of $b$, written $b'\succeq b$, if $b'_{i}=b_{i}$ for
all $b_{i}\neq *$. 
We use $b' \succ b$ to denote that $b' \succeq b$ and $b' \neq b$.


Let
$g\colon\left(\Gamma \cup \{*\}\right)^{n}\rightarrow \mathbb{Z}_{\geq
  0}$
be a utility function.  
Utility function
$g\colon(\Gamma \cup \{*\})^{n}\rightarrow \mathbb{Z}_{\geq 0}$ has
{\em goal value} $Q$ if $g(a) = Q$ for all realizations
$a \in \Gamma^n$.

We define $\Delta{g}(b,i,\gamma) := g(b_{i\leftarrow\gamma})-g(b)$.  

A standard utility function is a set function
$f:2^N \rightarrow \mathbb{R}_{\geq 0}$.  It is monotone if for all
$S \subset S' \subseteq N$, $f(S) \leq f(S')$.  It is submodular if in
addition, for $i \in N-S$,
$f(S \cup \{i\}) - f(S) \geq f(S' \cup \{i\}) - f(S')$.  We
extend the definitions of
monotonicity and submodularity to (state-dependent) utility function
$g\colon\left(\Gamma \cup \{*\}\right)^{n}\rightarrow \mathbb{Z}_{\geq 0}$
as follows:

\begin{itemize}
\item $g$ is \textit{monotone} if for
  $b\in\left(\Gamma \cup \{*\}\right)^{n}$, $i\in N$ such that
  $b_{i}=*$, and $\gamma\in\Gamma$, we have
  $g(b) \leq g(b_{i\leftarrow\gamma})$
\item $g$ is \textit{submodular} if for all
  $b,b'\in \left(\Gamma\cup\{*\}\right)^{n}$ such that $b'\succ b$,
  $i\in N$ such that $b_{i}=b'_{i}=*$, and $\gamma\in\Gamma$, we have
  $\Delta{g}(b,i,\gamma) \geq \Delta{g}(b'i,\gamma)$.
\end{itemize}


Let $\mathcal{D}$ be a probability distribution on $\Gamma^n$.
Let $X$ be a random variable drawn from $\mathcal{D}$.  For
$a \in \Gamma^n$ and $b \in (\Gamma \cup \{*\})^n$, we define
$\Pr[a \mid b] := \Pr[X=a \mid a \succeq b]$.  For $i$ such
that $b_{i}=*$, we define
$\mathbb{E}[\Delta g(b,i,\gamma)] := \sum_{a\in\Gamma^{n}:a\succeq
  b}\Delta{g}(b,i,a_{i})\Pr[a \mid b]$.

\begin{itemize}
\item $g$ is \textit{adaptive submodular with respect to $\mathcal{D}$}
  if for all $b'$, $b$ such that $b'\succ b$, $i\in N$ such that
  $b_{i}=b'_{i}=*$, and $\gamma \in \Gamma$, we have
  $\mathbb{E}[\Delta{g}(b,i,\gamma)]\geq\mathbb{E}[\Delta{g}(b',i,\gamma)]$.
\end{itemize}

Intuitively, we can view $b$ as partial information about states of items $i$ in a random
realization $a \in \Gamma^n$, with $b_i=*$ meaning the state of item $i$ is unknown. 
Then $g$ measures the utility of that
information, and $\mathbb{E}[\Delta_g(b,i,\gamma)]$ is the expected
increase in utility that would result from discovering the state of
$i$.

For $g\colon(\Gamma \cup \{*\})^{n}\rightarrow \mathbb{Z}_{\geq 0}$ with
goal value $Q$, 
and $b \in (\Gamma \cup \{*\})^n$ and $i \in N$, where $b_i = *$,
let
$\gamma_{b,i}$ be the state $\gamma \in \Gamma$ such that
$\Delta g(b,i,\gamma)$ is minimized (if more than one
minimizing state exists, choose one arbitrarily).  Thus $\gamma_{b,i}$
is the state of item $i$ that would produce the smallest increase in
utility, and thus is ``worst-case'' in terms of utility gain, if we
start from $b$ and then discover the state of $i$.

For fixed $g\colon(\Gamma \cup \{*\})^{n}\rightarrow \mathbb{Z}_{\geq 0}$ with
goal value $Q$, we define an associated quantity $\rho$, as follows:
\[\rho := \min \frac{\Delta g(b,i,\gamma)}{Q-g(b)}\]
where the minimization is over $b,i,\gamma$,
where $b \in (\Gamma \cup \{*\})^n$ such that $g(b) < Q$,
$i \in N$, $b_i = *$, and $\gamma\in\Gamma - \{\gamma_{b,i}\}$.

Intuitively, right before the state of an item $i$ is discovered,
there is a certain distance from the current utility achieved to the goal utility.
When the state of that item is discovered, the distance to goal is reduced by
some fraction (or possibly by zero).
The size of that fraction can vary depending on the state of the item.
In the definition of $\rho$, we are concerned with the value of that fraction, 
not for the worst-case state in this case
(leading to the smallest fraction), but for the next-to-worst case state.
The parameter $\rho$ is the smallest possible value for this fraction,
starting from any partial realization, and considering any item $i$
whose state is about to be discovered.

An instance of the Scenario SC problem is a tuple $(g,Q,S,w,c)$, where
$g\colon(\Gamma \cup \{*\})^{n}\rightarrow \mathbb{Z}_{\geq 0}$ is an
integer-valued, monotone submodular utility function with goal value
$Q > 0$, $S\subseteq \Gamma^n$, $w:S \rightarrow \mathbb{Z}_{>0}^{n}$
assigns a weight to each realization $a \in S$, and
$c\in \mathbb{R}_{>0}^{n}$ is a {\em cost vector}.  We consider a
setting where we select items without repetition from the set of items
$N$, and the states of the items correspond to an initially unknown
realization $a \in \Gamma^n$.  Each time we select an item, the state
$a_i$ of the item is revealed.  The selection of items can be
adaptive, in that the next item chosen can depend on the states of the
previous items.  We continue to choose items until $g(b) = Q$, where
$b$ is the partial realization representing the states of the chosen
items.

The Scenario SC problem asks for an adaptive order in which to
choose the items (i.e., a {\textit strategy}), until goal value $Q$ is
achieved, such that the expected sum of the costs of the chosen items
is minimized.  The expectation is with respect to the distribution on
$\Gamma^n$ that is proportional to the weights on the assignments in
the sample: $\Pr[a] = 0$ if $a \not\in S$, and
$\Pr[a] = \frac{w(a)}{W}$ otherwise, where $W = \sum_{a \in S} w(a)$.
We call this the {\em sample distribution} defined by $S$ and $w$ and
denote it by $\mathcal{D}_{S,w}$.

The strategy corresponds to a decision tree.  
The internal nodes of the tree are labeled with items $i \in N$, and
each such node has one child for each state $\gamma \in \Gamma$.
Each root-leaf path in the tree is associated with a partial realization
$b$ such that for each consecutive pairs of nodes $v$ and $v'$ on the
path, if $i$ is the label of $v$, and $v'$ is the \textit{$\gamma$-child} of $v$,
then $b_i = \gamma$.  If $i$ does not label any node in the path,
then $b_i = *$.
The tree may be output
in an implicit form (for example, in terms of a greedy rule),
specifyng how to determine the next item to choose, given the previous
items chosen and their states.  Although realizations $a \not\in S$ do
not contribute to the expected cost of the strategy, we require the
strategy to achieve goal value $Q$ on \textit{all} realizations
$a \in \Gamma^n$.

We will make frequent use of a construction that we call the \textit{standard OR construction} (cf.~\cite{guilloryBilmes11,deshpandeHellersteinKletenik}).  
It is a method for combining two
monotone submodular utility functions $g_1$ and $g_2$ defined on
$(\Gamma \cup \{*\})^n$, and values $Q_1$ and $Q_2$, into a
new monotone submodular utility function $g$.
For
$b \in (\Gamma \cup \{*\})^n$,
\[g(b) = Q_1Q_2 - (Q_1 - g_1(b))(Q_2 - g_2(b))\]
Suppose that on any $a\in\Gamma^{n}$, $g_{1}(a)=Q_{1}$ or
$g_{2}(a)=Q_{2}$. Then, $g(a)=Q_{1}Q_{2}$ for all $a\in\Gamma^{n}$.

\section{Mixed Greedy}
\label{sec:MG}

The Mixed Greedy algorithm is a generalization of the approximation algorithm developed by
Cicalese et al.\ for the Equivalence Class Determination problem.  
That algorithm effectively solves
the Scenario Submodular Cover problem for a particular
``Pairs'' utility
function associated with Equivalence Class Determination. 
In contrast, Mixed Greedy
can be used on any monotone, submodular utility function $g$.


Following Cicalese et al., we present Mixed Greedy as outputting a
decision tree. If the strategy is only to be used on one realization, it is not
necessary to build the entire tree. While Mixed Greedy is very similar to
the algorithm of Cicalese et~al, we describe it fully here so
that our presentation is self-contained.

\subsection{Algorithm}

The Mixed Greedy algorithm builds a decision tree for 
Scenario SC instance
$(g,Q,S,w,c)$.
The tree is built top-down.  It has approximately
optimal expected cost, with respect to the sample distribution 
$\mathcal{D}_{S,w}$
defined by $S$ and $w$.
Each internal node of the constructed tree has $|\Gamma|$
children, one corresponding to each state $\gamma \in \Gamma$.  We
refer to the child corresponding to $\gamma$ as the $\gamma$-child.

The Mixed Greedy algorithm works by calling the recursive function
\texttt{MixedGreedy}, whose pseudocode we present in
Algorithm~\ref{alg:main}.
In the initial call to \texttt{MixedGreedy}, $b$ is set to be equal to
$(*,\dots,*)$.  Only the value of $b$ changes between the recursive
calls; the other values remain fixed.  Each call to \texttt{MixedGreedy}
constructs a subtree of the full tree for $g$, rooted at a node $v$ of
that tree.  In the recursive call that builds the subtree rooted at
$v$, $b$ is the partial realization corresponding to the path from the
root to $v$ in the full tree: $b_i = \gamma$ if the path includes a
node labeled $i$ and its $\gamma$-child, and $b_i = *$ otherwise.

\begin{algorithm}
  \small \textbf{Procedure} \texttt{MixedGreedy}($g,Q,S,w,c,b$)
  \begin{algorithmic}[1]
    \STATE { \textbf{If} $g(b)=Q$ \textbf{then return} a single (unlabeled)
      leaf $l$} \label{b1start} \STATE { Let $T$ be an empty tree }
    \STATE { $N' \leftarrow \{i:b_i = *\}$ } 
    \STATE { For $i\in N'$, 
      $\sigma_{i} \leftarrow \argmin\limits_{\gamma\in\Gamma} \Delta g(b,i,\gamma)$ }\label{line:sigmadef}
    \label{step:setsigma}
      \STATE {Define 
      $g'\colon 2^{N'}\rightarrow \mathbb{Z}_{\geq 0}$ such that
      for all $U \subseteq N'$, $g'(U) = g(b_U) - g(b)$, where $b_U$
      is the extension of $b$ produced by setting $b_i = \sigma_{i}$
      for all $i \in U$.}  
    \STATE{ $B \leftarrow$ \texttt{FindBudget}$(N', g', c)$,
      $\; spent \leftarrow 0, \; spent_2 \leftarrow 0, \; k\leftarrow 1$} 
      \label{line:budget} \label{b2start}
    \STATE {$I \leftarrow \{i\in N'| c_i \leq B\}$}
    \STATE {For all $R \subseteq I$, define $D_R := \{a \in S|a \succeq b$ and $a_i \neq \sigma_i$ for some $i \in R\}$}
    \STATE {Define $h:2^I \rightarrow \mathbb{Z}_{\geq 0}$ such that for all $R \subseteq I$, $h(R) = \sum_{a \in D_R} w(a)$}
    \STATE {$R \leftarrow \emptyset$}
    \REPEAT \label{b3-start}
    \STATE {Let $i$ be an item which maximizes
      $\frac{h(R \cup \{i\}) - h(R)}{c_i}$ among all items $i \in I$} 
    \label{step:greedychoice1}
    \STATE {Let $t_{k}$ be a new node labeled with item $i$ }
    \STATE {\textbf{If} $k=1$ \textbf{then} make $t_1$ the root of $T$ }
    \STATE {\textbf{else} make $t_k$ the $\sigma_{j}$-child of $t_{k-1}$}
    \STATE {$j\leftarrow i$}
    \FOR { every $\gamma \in \Gamma$ such that
      $ \gamma \neq \sigma_{i}$} \label{line:main-Ui1}
    \STATE{ $T^{\gamma} \leftarrow $ \texttt{MixedGreedy}($g,Q,S,w,c,b_{i \leftarrow \gamma}$) }
    \label{step:rec1}
    \STATE { Attach $T^{\gamma}$ to $T$ by making the root of
      $T^{\gamma}$ the $\gamma$-child of
      $t_k$} \label{line:main-Call-1}
    \ENDFOR 
       \STATE
    {$b_{i}\leftarrow \sigma_{i}, \; R \leftarrow R \cup \{i\}, \; 
      I \leftarrow I - \{i\}, \; spent \leftarrow spent + c_i ,\; 
     k\leftarrow k+1$}
    \UNTIL{ $spent \geq B$} \label{b3-end}

    \REPEAT \label{b4-start}
    \STATE {Let $i$ be an item which maximizes
      $\frac{\Delta g(b,i,\sigma_i)}{c_{i}}$ among all items $i \in I$}\label{line:main-greedy2}
    \label{step:greedychoice2}
    \STATE { Let $t_{k}$ be a node labeled with item $i$ }
    \STATE { Make $t_{k}$ the $\sigma_{j}$-child of $t_{k-1}$}
    \STATE {$j\leftarrow i$}
    \FOR { every $\gamma \in \Gamma$ such that
      $ \gamma \neq \sigma_{i}$} \label{line:main-Ui2}
    \STATE { $T^{\gamma} \leftarrow $ \texttt{MixedGreedy}($g,Q,S,w,c,b_{i \leftarrow \gamma}$) }
    \label{step:rec2}
    \STATE { Attach $T^{\gamma}$ to $T$ by making the root of
      $T^{\gamma}$ the $\gamma$-child of
      $t_k$} \label{line:main-Call-2}
    \ENDFOR
    \STATE
    {$b_{i} \leftarrow \sigma_{i}, \; I \leftarrow I - \{i\}, \; 
      spent_2 \leftarrow spent_2 + c_i ,\; k\leftarrow k+1$}
    \UNTIL{$spent_2 \geq B$ \textbf{or}  $I = \emptyset$}
    \label{b4-end}

    \STATE{$T' \leftarrow $ \texttt{MixedGreedy}($g, Q, S,w,c,b$);
      Attach $T'$ to $T$ by making the root of $T'$ the
      $\sigma_{j}$-child of $t_{k-1}$ } \label{line:main-Call-3}
     \label{step:lastrec}

    \STATE{Return $T$}
  \end{algorithmic}
  \caption{} \label{alg:main} 
\end{algorithm}

\begin{algorithm}
\textbf{Procedure} \texttt{FindBudget}($I,f,c$)
  \begin{algorithmic}[1]
      \STATE {Let
      $\alpha=1-e^{-\chi}\approx 0.35$} \STATE { Do a binary search in
      the interval [0,$\sum_{i\in I} c_i$] to find the smallest $B$
      such that Wolsey's greedy algorithm for maximizing a submodular
      function within a budget of $B$, applied to $f$ and the items in $I$,
      returns a set of items with utility at least
      $\alpha f(I)$}
    \STATE { Return $B$}
  \end{algorithmic}
  \normalsize
\end{algorithm}


The algorithm of Cicalese et al.\ for the Equivalence Class
Determination problem is essentially the same as our Mixed Greedy
algorithm, for $g$ equal to their ``Pairs''
utility function.  (There is one small difference -- in their
algorithm, the first stage ends right before the greedy step in which
the budget $B$ would be exceeded, whereas we allow the budget to be
exceeded in the last step.)  Like their algorithm, our Mixed Greedy
algorithm relies on a greedy algorithm for the Budgeted Submodular
Cover problem due to Wolsey.  We describe Wolsey's algorithm in detail
in Appendix~\ref{sec:wolsey}.

If $g(b) = Q$, then \texttt{MixedGreedy} returns an (unlabeled) single node,
which will be a leaf of the full tree for $g$.  Otherwise, \texttt{MixedGreedy}
constructs a tree $T$.
It does so by computing a special realization called $\sigma$, and then
iteratively using $\sigma$ to construct a
path descending from the root of this subtree, which is called the
\emph{backbone}.  It uses recursive calls to build the subtrees
``hanging'' off the backbone. The backbone has a special property: for each
node $v'$ in the path, the successor node in the path is the
$\sigma_i$ child of $v'$, where $i$ is the item labeling node $v'$.

The construction of the backbone is done as follows.  Using subroutine
\texttt{FindBudget}, \texttt{MixedGreedy} first computes a lower bound $B$ on the
minimum additional cost required in order to achieve a portion
$\alpha$ of the goal value $Q$, assuming we start with partial
realization $b$ (Step~\ref{line:budget}).  This computation is done
using the Greedy algorithm of~\cite{wolsey82} described in
Section~\ref{sec:wolsey} in the Appendix.

After calculating $B$, \texttt{MixedGreedy} constructs the backbone in two
stages, using a different greedy criterion in each to determine which
item $i$ to place in the current node.  In the first stage,
corresponding to the first repeat loop of the pseudocode, the goal is
to remove weight (probability mass) from the backbone, as cheaply and as soon
as possible.  That is, consider a realization $a \in \Gamma^n$ to be
removed from the backbone (or ``covered'') if $i$ labels a node in the
spine and $a_i \neq \sigma_{i}$; removing $a$ from the backbone
results in the loss of weight $w(a)$ from the backbone.  The
greedy choice used in the first stage in Step~\ref{step:greedychoice1}
follows the standard rule of maximizing {\em bang-for-the-buck}; the
algorithm chooses $i$ such that the amount of probability mass removed
from the backbone, divided by the cost $c_i$, is maximized.  However,
in making this greedy choice, it only considers items that have cost
at most $B$.  The first stage ends as soon as the total cost of the
items in the chosen sequence is at least $B$.  
For each item $i$ chosen during the stage,
$b_i$ is set to $\sigma_i$.


In the second stage, corresponding to the second repeat loop, the goal
is to increase utility as measured by $g$, under the assumption that
we already have $b$, and that the state of each remaining item $i$ is
$\sigma_{i}$.  The algorithm again uses the bang-for-the-buck rule,
choosing the $i$ that maximizes the increase in utility, divided by
the cost $c_i$ (Step~\ref{step:greedychoice2}).  In making this greedy
choice, it again considers only items that have cost at most $B$.  The
stage ends as soon as the total cost of the items in the chosen
sequence is at least $B$.
For each item $i$ chosen during the stage,
$b_i$ is set to $\sigma_i$.

In Section~\ref{sec:defs}, we defined the value $\rho$. The way the
value $B$ is chosen guarantees that the updates to $b$ during the two
greedy stages cause the value of $Q-g(b)$ to shrink by at least a
fraction $\rho$ before each recursive call.  In Appendix~\ref{app:mixedgreedy}, we prove
this fact and use it to prove the following theorem.

\begin{theorem}
  \label{thm:mixedgreedy1}
  Mixed Greedy is an approximation algorithm for the Scenario
  Adaptive Submodular Cover problem that achieves an approximation
  factor of $O(\frac{1}{\rho}\log Q)$.
\end{theorem}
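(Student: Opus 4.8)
The plan is to import the accounting scheme of Cicalese et al., supplying a new \emph{progress lemma} in place of their technical bound on the ``sepcost.'' Let $\tau$ be the tree produced by Mixed Greedy and let $\mathrm{OPT}$ be the optimal expected cost. The tree decomposes according to the recursive calls to \texttt{MixedGreedy}: a call produces a subtree rooted at a node $v$, builds a backbone (the two greedy stages), hangs a recursive subtree $T^{\gamma}$ off each backbone node for each $\gamma\neq\sigma_i$, and appends one continuation call $T'$ at the end. Along any root--leaf path, the cost incurred equals the sum, over the recursive calls the path enters, of the cost of the backbone segment traversed in that call, which is at most the full backbone cost of that call. Hence $\mathrm{cost}(\tau)\le\sum_{v}\Pr[\text{reach }v]\cdot O(B_v)=\sum_{\ell}\bigl(\sum_{v\text{ at depth }\ell}\Pr[\text{reach }v]\cdot O(B_v)\bigr)$, and it suffices to prove (A) the number of nonempty depths is $O(\tfrac{1}{\rho}\log Q)$, i.e.\ every root--leaf path enters at most that many recursive calls, and (B) each inner sum is $O(\mathrm{OPT})$.

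\emph{Step (A): the progress lemma.} I would show that entering any recursive call multiplies $Q-g(b)$ by at most $1-\min\{\rho,c_0\}$ for an absolute constant $c_0$. For a subtree $T^{\gamma}$ attached at a backbone node labeled $i$ with $\gamma\neq\sigma_i$, entering it performs $b_i\leftarrow\gamma$, and since $\sigma_i=\argmin_{\gamma'}\Delta g(b,i,\gamma')$ is exactly the state $\gamma_{b,i}$ appearing in the definition of $\rho$, we get $\Delta g(b,i,\gamma)\ge\rho(Q-g(b))$, so $Q-g(b)$ shrinks by a factor $\le 1-\rho$. For the continuation call $T'$, first note the identity $g'(N')=g(b_{N'})-g(b)=Q-g(b)$, because $b_{N'}$ assigns a state to \emph{every} item and is therefore a realization in $\Gamma^n$, on which $g$ equals $Q$. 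By construction \texttt{FindBudget} returns $B$ no smaller than the minimum cost of a subset of $N'$ with $g'$-value at least $\alpha g'(N')$ (a smaller budget cannot attain this even optimally). Since $g'$ is monotone, deleting the stage-$1$ items from such a subset leaves a set still affordable within $B$ which, together with the stage-$1$ items, brings $g'$ to at least $\alpha g'(N')$; applying Wolsey's guarantee to the second stage (greedy bang-for-buck on the residual of $g'$, run with budget $B$) shows $g(b)$ increases over the two stages by at least a constant times $\alpha(Q-g(b))$, so $Q-g(b)$ shrinks by a constant factor. Combining the two cases and iterating, after $t$ nested recursive calls $Q-g(b)\le(1-\min\{\rho,c_0\})^t\,Q$; since $g$ is integer valued with goal $Q$ (so the recursion bottoms out once $Q-g(b)<1$) and $\rho\le 1$ by monotonicity of $g$, this gives the bound in (A) and, incidentally, that $\tau$ is a valid strategy.

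\emph{Step (B): per-depth backbone cost.} Each greedy stage stops as soon as its accumulated cost reaches $B$ and only ever selects items of cost at most $B$, so a single call's backbone costs $O(B)$ (the budget being exceeded in the last step adds one extra item of cost $\le B$ and is harmless). By Wolsey's guarantee in the other direction, $B$ is $O(1)$ times the cost of fully covering $g'$ with the $\sigma$-states, which lower-bounds the cost that any valid strategy must incur on the realizations consistent with $b_v$. Since at a fixed recursion depth a random realization enters at most one call and the associated partial realizations correspond to disjoint events, the charging argument of Cicalese et al.\ then bounds $\sum_{v\text{ at depth }\ell}\Pr[\text{reach }v]\cdot O(B_v)$ by $O(\mathrm{OPT})$. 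The only part of that argument that used the specific ``Pairs'' utility function---their bound on the cost contributed by the first (weight-covering) greedy stage---is replaced here by a Streeter--Golovin-style analysis of Min-Sum Submodular Cover applied to the weight objective $h$ of that stage. Multiplying the bounds from (A) and (B) gives $\mathrm{cost}(\tau)=O(\tfrac{1}{\rho}\log Q)\cdot\mathrm{OPT}$.

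\emph{Main obstacle.} The delicate point is (B), and within it the sepcost replacement: one must argue that the first greedy stage, which optimizes the weight objective $h$ rather than $g'$, still strips enough backbone probability mass, cheaply enough, that its expected contribution at each depth is $O(\mathrm{OPT})$. This is exactly where the Min-Sum Submodular Cover machinery of Streeter and Golovin enters and where the proof departs from Cicalese et al. The remaining pieces---the progress lemma, the per-call $O(B)$ cost bound, and the amortization over depths---are comparatively routine once the identity $g'(N')=Q-g(b)$ and the reading of $\rho$ from the definition of $\sigma_i$ are in hand.
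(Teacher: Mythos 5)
Your plan follows the paper's proof in structure: a progress lemma showing each recursive call shrinks $Q-g(b)$ by a factor $1-\min\{\rho,c_0\}$ (your Step (A) matches the paper's lemma on $g(b^{final})$ together with the observation that $\sigma_i$ plays the role of $\gamma_{b,i}$), plus a per-call bound on the backbone's contribution, amortized over the disjoint subproblems arising from the recursive calls. Step (A) is essentially complete and correct, including the identity $g'(N')=Q-g(b)$ and the two-stage Wolsey argument for the continuation call.

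The gap is in Step (B), and it is the heart of the matter. What must be shown is that the \emph{expected} backbone cost of a call, $c_Y=\sum_{v\in Y}\tilde p(v)c_v$ --- where $\tilde p(v)$ is the probability of reaching backbone node $v$ and decays as realizations fall off the backbone --- is $O(C^{*})$, where $C^{*}$ is the \emph{expected} cost of the optimal tree for the induced instance. Your justification, that the backbone costs $O(B)$ and that $B$ ``lower-bounds the cost that any valid strategy must incur on the realizations consistent with $b_v$,'' is false: the budget argument only gives $B\le\kappa(\tau^{*},\sigma)$ for the single worst-case realization $\sigma$, which need not lie in the sample and may carry zero probability, so $C^{*}$ can be far smaller than $B$. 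Charging $\Pr[\text{reach }v]\cdot O(B_v)$ against the conditional expected optimal cost therefore does not go through. The correct argument (the paper's bound $c_Y\le 24C^{*}$) treats the backbone as a schedule for a Min-Sum Submodular Cover instance with utility $h_p(R)=1-\sum_{a\succeq\sigma^{R}}p(a)$, notes that $c_Y=\cost(h_p,S^{Y})$ and $C^{*}=\cost(h_p,S^{*})$, shows the second stage contributes at most twice the first stage's cost (using $\ell(S^{1})\ge B$, $\ell(S^{2})\le 2B$, and monotonicity of $h_p$), and then proves a \emph{truncated-schedule} variant of the Streeter--Golovin $4$-approximation to compare the first stage --- greedy for $h_p$ but restricted to items of cost at most $B$ --- against $S^{*}$ truncated at time $B$. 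You correctly name Streeter--Golovin as the tool and flag this as the main obstacle, but none of these steps is supplied, and the interim reasoning offered in their place would not yield the bound. Until that lemma is proved, the proposal is a correct outline rather than a proof.
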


\section{Scenario Mixed Greedy}
\label{sec:SMG}
We now present a variant of Mixed Greedy that eliminates the dependence on
$\rho$ in the approximation bound in favor of a dependence on $m$, the size of
the sample.  We call this variant Scenario Mixed Greedy.  

Scenario Mixed Greedy works by first modifying $g$ to produce a new
utility function $g_S$, and then running Mixed Greedy with $g_S$,
rather than $g$.  Utility function $g_S$ is produced by combining $g$
with another utility function $h_S$, using the standard OR construction
described at the end of Section~\ref{sec:defs}.
Here $h_S \colon\left(\Gamma \cup \{*\}\right)^{n}\rightarrow
\mathbb{Z}_{\geq 0}$,
where $h_S(b) = m - |\{a \in S:a \succeq b\}|$ and $m = |S|$.
Thus $h_S(b)$ is the total number of assignments that
have been eliminated from $S$ because they are incompatible with the
partial state information in $b$.  Utility $m$ for $h_S$ is achieved
when all assignments in $S$ have been eliminated.  Clearly,
$h_S$ is monotone and submodular.  

When the OR construction is applied to combine $g$ and $h_S$, the
resulting utility function $g_S$ reaches its goal value $Qm$ when all
possible realizations of the sample have been eliminated or
when goal utility is achieved for $g$.

In an on-line setting, Scenario Mixed Greedy uses the
following procedure to determine the adaptive sequence of items to choose
on an initially unknown realization $a$.

\bigskip
\noindent\textbf{Scenario Mixed Greedy:}
\begin{enumerate}\cramped
\item Construct utility function $g_S$ by applying the standard OR construction
to $g$ and utility function $h_S$.
\item Adaptively choose a sequence of items by running
  Mixed Greedy for utility function $g_S$ with goal value
  $Qm$, with respect to the sample distribution $\mathcal{D}_{S,w}$. 
\item After goal value $Qm$ is achieved, 
if the final partial realization $b$ computed by Mixed Greedy
  does not satisfy $g(b) = Q$, then choose the
  remaining items in $N$ in a fixed but arbitrary order until $g(b) = Q$.
\end{enumerate}

The third step in the procedure is present because 
goal utility $Q$ must be reached for $g$ even on realizations $a$ that are not in $S$.


\begin{theorem}
  \label{thm:mixedgreedy2}
  Scenario Mixed Greedy is an approximation algorithm for the
  Scenario Submodular Cover problem that achieves an
  approximation factor of $O(\log Qm)$, where $m$ is the size of sample $S$.
\end{theorem}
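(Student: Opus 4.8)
The plan is to deduce the theorem from Theorem~\ref{thm:mixedgreedy1}, applied to the Scenario SC instance $(g_S, Qm, S, w, c)$. There are three things to check: that $(g_S, Qm, S, w, c)$ is a legitimate Scenario SC instance; that its optimal expected cost is no larger than that of the original instance $(g, Q, S, w, c)$; and, crucially, that the parameter $\rho$ associated with $g_S$ is bounded below by an absolute constant, so that the $O(\frac{1}{\rho}\log(Qm))$ guarantee of Theorem~\ref{thm:mixedgreedy1} becomes $O(\log Qm)$. A short bookkeeping argument then shows the extra items chosen in Step~3 of the procedure do not affect the expected cost.

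Validity and the cost comparison are routine unwindings of the OR construction. Since that construction preserves monotonicity and submodularity, and $g(a)=Q$ for every $a\in\Gamma^n$, the OR-construction property gives $g_S(a)=Qm$ for all $a\in\Gamma^n$; integrality and nonnegativity of $g_S$ are immediate, so $(g_S, Qm, S, w, c)$ is a valid instance with goal value $Qm>0$ and Mixed Greedy applies. Writing $N(b):=|\{a\in S: a\succeq b\}| = m-h_S(b)$, the OR construction also yields the identity $Qm - g_S(b) = (Q-g(b))\,N(b)$. In particular any $b$ with $g(b)=Q$ has $g_S(b)=Qm$, so every strategy feasible for $(g,Q,S,w,c)$ (one that reaches $g=Q$ on all realizations) is also feasible for $(g_S,Qm,S,w,c)$; applying this to an optimal strategy for the former gives $\mathrm{OPT}(g_S,Qm)\le \mathrm{OPT}(g,Q)$.

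The heart of the proof is the lower bound on $\rho$ for $g_S$. Fix $b$ with $g_S(b)<Qm$ (equivalently $g(b)<Q$ and $N(b)>0$) and an item $i$ with $b_i=*$. For $\gamma\in\Gamma$ put $p_\gamma := N(b_{i\leftarrow\gamma})/N(b)$ and $q_\gamma := (Q-g(b_{i\leftarrow\gamma}))/(Q-g(b))$. Then $\sum_{\gamma}p_\gamma = 1$, because the sample assignments consistent with $b$ partition according to the state of item $i$, and $p_\gamma, q_\gamma \in[0,1]$ (the latter since $g$ is monotone with goal value $Q$). Applying the identity at $b_{i\leftarrow\gamma}$ gives $Qm - g_S(b_{i\leftarrow\gamma}) = (Qm-g_S(b))\,p_\gamma q_\gamma$, so $\Delta g_S(b,i,\gamma) = (Qm-g_S(b))(1-p_\gamma q_\gamma)$. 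Hence the worst-case state $\gamma_{b,i}$ for $g_S$ is $\gamma^*:=\argmax_\gamma p_\gamma q_\gamma$, and the contribution of $(b,i)$ to $\rho$ is $1-\max_{\gamma\neq\gamma^*}p_\gamma q_\gamma$. I will show $\max_{\gamma\neq\gamma^*}p_\gamma q_\gamma\le 1/2$: let $\gamma^\dagger\neq\gamma^*$ attain this maximum; if $p_{\gamma^\dagger}\le 1/2$ then $p_{\gamma^\dagger}q_{\gamma^\dagger}\le p_{\gamma^\dagger}\le 1/2$, while if $p_{\gamma^\dagger}> 1/2$ then $p_{\gamma^*}\le 1-p_{\gamma^\dagger}< 1/2$, so $p_{\gamma^\dagger}q_{\gamma^\dagger}\le p_{\gamma^*}q_{\gamma^*}\le p_{\gamma^*}< 1/2$. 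Thus $\rho\ge 1/2$ for $g_S$. (When $|\Gamma|\le 1$ the set over which $\rho$ is minimized is empty and the problem is trivial.)

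Putting the pieces together: by $\rho\ge 1/2$ and Theorem~\ref{thm:mixedgreedy1}, Mixed Greedy run on $(g_S,Qm,S,w,c)$ has expected cost $O(\log Qm)\cdot \mathrm{OPT}(g_S,Qm)$, which is $O(\log Qm)\cdot\mathrm{OPT}(g,Q)$ by the cost comparison. Finally, on any realization $a\in S$ we have $N(b)\ge 1$ throughout the run (the true realization stays consistent with $b$), so $g_S(b)=Qm$ forces $g(b)=Q$ and Step~3 contributes nothing; since the expectation is with respect to $\mathcal{D}_{S,w}$, the expected cost of Scenario Mixed Greedy equals that of Mixed Greedy on $(g_S,Qm,S,w,c)$, and the bound follows. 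The only nonroutine step is $\rho\ge 1/2$, and within it the only real content is the two-case observation that the second-worst state can retain at most half of the residual quantity $(Q-g(b))N(b)$.
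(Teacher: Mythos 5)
Your proposal is correct and follows essentially the same route as the paper: reduce to Theorem~\ref{thm:mixedgreedy1} via the OR construction, compare optima, and lower-bound $\rho$ for $g_S$ by $1/2$. Your two-case argument for $\max_{\gamma\neq\gamma^*}p_\gamma q_\gamma\le 1/2$ is just a normalized restatement of the paper's disjointness argument (not both of the sets $\{a\succeq b: a_i=\gamma\}$ and $\{a\succeq b: a_i=\gamma_{b,i}\}$ can exceed half of $C_b$), so there is nothing substantively different to report.
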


\begin{proof}
  Scenario Mixed Greedy achieves utility
  value $Q$ for $g$ when run on any realization $a \in \Gamma^n$, because
  the $b$ computed by Mixed Greedy is such that $a \succeq b$, and
  the third step ensures that $Q$ is reached.

  Let $c(g)$ and $c(g_S)$ denote the expected cost of the optimal
  strategies for the Scenario SC problems on $g$ and $g_S$
  respectively, with respect to the sample distribution
  $\mathcal{D}_{S,w}$.  Let $\tau$ be an optimal strategy for $g$
  achieving expected cost $c(g)$.  It is also a valid strategy for the
  problem on $g_S$, since it achieves goal utility $Q$ for $g$ on all
  realizations, and hence achieves goal utility $Qm$ for $g_S$ on all
  realizations.  Thus $c(g_S) \leq c(g)$.

  The two functions, $g$ and $h_S$, are monotone and submodular. Since the function $g_S$ is
  produced from them using the standard OR construction, $g_S$ is also
  monotone and submodular.  Let $\rho_S$ be the value of parameter $\rho$ for the function $g_S$.
  By the bound in
  Theorem~\ref{thm:mixedgreedy1}, running Mixed Greedy on $g_{S}$, for
  the sample distribution $\mathcal{D}_{S,w}$, has expected cost that is at most
  a $O(\frac{1}{\rho_S}\log Qm)$ factor more than $c(g_S)$.
  Its expected cost is
  thus also within an $O(\frac{1}{\rho_S}\log Qm)$ factor of $c(g)$.  Making
  additional choices on realizations not in $S$, as done in the last
  step of Scenario Mixed Greedy, does not affect the expected
  cost, since these realizations have zero probability.

  Generalizing an argument from~\cite{cicaleseLaberSaettler}, we now
  prove that $\rho_S$ is lower bounded by a constant fraction.
  Consider any $b \in (\Gamma \cup \{*\})^n$ and $i \in N$ such that
  $b_i = *$, and any $\gamma \in \Gamma$ where
  $\gamma \neq \gamma_{b,i}$.  Let
  $C_b = |S| - h_S(b) = |\{a \in S \mid a \succeq b\}|$.  Since the
  sets $\{a \in S \mid a \succeq b$ and $a_i = \gamma\}$ and
  $\{a \in S \mid a \succeq b$ and $a_i = \gamma_{b,i}\}$ are
  disjoint, it is not possible for both of them to have size greater
  than $\frac{C_b}{2}$.  It follows that
  $\Delta{h_S}(b,i,\gamma) \geq \frac{C_b}{2}$ or
  $\Delta{h_S}(b,i,\gamma_{b,i}) \geq \frac{C_b}{2}$ or both.  By the
  construction of $g_S$, it immediately follows that
  $\Delta{g_S}(b,i,\gamma) \geq \frac{(Q-g(b))C_b}{2}$ or
  $\Delta{g_S}(b,i,\gamma_{b,i}) \geq \frac{(Q-g(b))C_b}{2}$ or both.
  Since $\gamma_{b,i}$ is the ``worst-case'' setting for $b_i$ with
  respect to $g_S$, it follows that
  $\Delta{g_S}(b,i,\gamma) \geq \Delta{g_{S}}(b,i,\gamma_{b,i})$, and
  so in all cases
  $\Delta{g_{S}}(b,i,\gamma) \geq \frac{(Q-g(b))C_{b}}{2}$.
  Also, $(Q - g(b))C_b = Qm - g_S(b)$. Therefore,
  $\rho_S \geq \frac{1}{2}$. The theorem follows from the bound given
  in Theorem~\ref{thm:mixedgreedy1}.
\end{proof}

\section{Scenario Adaptive Greedy}
\label{sec:SAG}

Scenario Adaptive Greedy works by first constructing a utility
function $g_W$, produced by applying the standard OR construction to
$g$ and utility function $h_W$.  Here
$h_W \colon\left(\Gamma \cup \{*\}\right)^{n}\rightarrow
\mathbb{Z}_{\geq 0}$,
where $h_W(b) = W - \sum_{a \in S:a \succeq b} w(a)$.  Intuitively,
$h_W(b)$ is the total weight of assignments that have been eliminated
from $S$ because they are incompatible with the partial state
information in $b$.  Utility $W$ is achieved for $h_W$ when all
assignments in $S$ have been eliminated.  It is obvious that $h_W$ is
monotone and submodular.  The function $g_W$ reaches its goal value
$QW$ when all possible realizations of the sample have been eliminated
or when goal utility is achieved for $g$.  Once $g_W$ is constructed,
Scenario Adaptive Greedy runs Adaptive Greedy on $g_W$.

In an on-line setting, Scenario Adaptive Greedy uses the following
procedure to determine the adaptive sequence of items to choose on an
initially unknown realization $a$.

\bigskip
\noindent\textbf{Scenario Adaptive Greedy:}
\begin{enumerate}\cramped
\item Construct modified utility function $g_W$ by applying the
  standard OR construction to $g$ and utility function $h_W$.
\item Run Adaptive Greedy for utility function $g_W$ with goal value
  $QW$, with respect to sample distribution $\mathcal{D}_{S,w}$,
  to determine the choices to make on $a$.
\item After goal value $QW$ is achieved, if the partial realization
  $b$ representing the states of the chosen items of $a$ does not
  satisfy $g(b) = Q$, then choose the remaining items in $N$ in
  arbitrary order until $g(b) = Q$.
\end{enumerate}

In Appendix~\ref{app:adaptivesubmod}, we prove the following lemma.

\begin{lemma}
  \label{lem:gW}
  Utility function $g_W$ is adaptive submodular with respect to sample
  distribution $\mathcal{D}_{S,w}$.
\end{lemma}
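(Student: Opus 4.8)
The plan is to prove adaptive submodularity of $g_W$ directly from the definition: for all $b' \succ b$ with $b_i = b'_i = *$, and all $\gamma \in \Gamma$, we must show $\mathbb{E}[\Delta g_W(b,i,\gamma)] \geq \mathbb{E}[\Delta g_W(b',i,\gamma)]$, where the expectations are over the sample distribution $\mathcal{D}_{S,w}$ conditioned on $a \succeq b$ (resp.\ $a \succeq b'$). Writing out $g_W(b) = QW - (Q - g(b))(W - h_W(b))$ via the OR construction, an increment $\Delta g_W(b,i,\gamma)$ expands into three terms: one proportional to $\Delta g(b,i,\gamma)$ weighted by $W - h_W(b)$, one proportional to $\Delta h_W(b,i,\gamma)$ weighted by $Q - g(b)$, and a cross term $\Delta g(b,i,\gamma)\,\Delta h_W(b,i,\gamma)$; I would collect these and take expectations. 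The key observation is that $W - h_W(b) = \sum_{a \in S : a \succeq b} w(a)$ is exactly $W \cdot \Pr[a \succeq b]$ (unconditional probability of consistency with $b$), so the ``remaining weight'' factor has a clean probabilistic meaning, and $\mathbb{E}[\Delta h_W(b,i,\gamma)]$ over the conditional distribution also simplifies to a sum of sample weights.

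The main structural step will be to show that the relevant function is \emph{non-decreasing along a path} from $b$ to $b'$: since $b' \succ b$, one can pass from $b$ to $b'$ by setting the states of the extra items one at a time, and it suffices to prove the inequality for the special case where $b'$ differs from $b$ in exactly one coordinate $j$ (with $j \neq i$). So the heart of the argument reduces to: fix $b$, fix $i$ with $b_i = *$, fix $j \neq i$ with $b_j = *$, fix $\gamma, \gamma' \in \Gamma$; show
\[
\mathbb{E}\bigl[\Delta g_W(b,i,\gamma) \,\big|\, a \succeq b\bigr] \;\geq\; \mathbb{E}\bigl[\Delta g_W(b_{j\leftarrow \gamma'},i,\gamma) \,\big|\, a \succeq b_{j\leftarrow\gamma'}\bigr].
\]
To handle this I would multiply both sides by the appropriate normalizing weights (replacing conditional probabilities by the unconditional sample weights $w(a)$) so the inequality becomes one between sums of nonnegative terms indexed by $a \in S$, and then verify it term by term using: (i) pointwise submodularity of $g$ (which gives $\Delta g(b,i,\cdot) \geq \Delta g(b_{j\leftarrow\gamma'},i,\cdot)$), (ii) submodularity of $h_W$, and (iii) monotonicity of both, to control the signs of the cross terms and the mixed factors $(Q - g(b))$ vs.\ $(Q - g(b_{j\leftarrow\gamma'}))$ and $(W - h_W(b))$ vs.\ $(W - h_W(b_{j\leftarrow\gamma'}))$. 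One has to be careful that conditioning on a larger partial realization both shrinks the support $\{a \in S : a \succeq b'\} \subseteq \{a \in S : a \succeq b\}$ and rescales by a larger denominator; clearing denominators up front is what makes the bookkeeping tractable.

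The hard part will be the cross term $\Delta g(b,i,\gamma)\,\Delta h_W(b,i,\gamma)$: neither factor is individually monotone in $b$ in a way that immediately gives the right direction for the product, so I expect to need a case analysis (e.g.\ on whether $\gamma = \gamma_{b,i}$ or not, or on whether the relevant sample assignments survive the conditioning on $b_j = \gamma'$) and to use the OR structure — specifically that $g(a) = Q$ or $h_W$ accounts for all surviving weight on realizations — to absorb the problematic contributions. This is also precisely the step that differs from the prior work cited (\cite{golovinKrauseRayNIPS,javdaniVOI,javdaniVOIlong}), since those arguments were tailored to their specific utility functions rather than an arbitrary monotone submodular $g$ OR-combined with a weight-counting function. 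Once the single-coordinate case is established, the general case follows by telescoping along the path from $b$ to $b'$, and the lemma is proved.
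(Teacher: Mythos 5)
There is a genuine gap. Your proposal correctly sets up the problem (the OR expansion of $\Delta g_W$, the relevant monotonicity and submodularity facts, the reduction to comparing conditional expectations), but the step you yourself flag as "the hard part" is exactly the entire content of the proof, and the method you propose for it is unlikely to work as stated. After clearing denominators, the quantity $\mathbb{E}[\Delta g_W(b,i,\gamma)]$ is a ratio of the form
\[
\frac{\sum_{\gamma} W_{\gamma}\overline{W}_{\gamma}\,(Q-g(b)) + U_{\gamma}W_{\gamma}^{2}}{\sum_{\gamma} W_{\gamma}},
\]
where $W_{\gamma}=w(b_{i\leftarrow\gamma})$ and $U_{\gamma}=\Delta g(b,i,\gamma)$. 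This is quadratic in the weights with a normalizing denominator that moves in the \emph{unfavorable} direction when you pass from $b$ to $b'$ (the denominator on the left-hand side is larger). A term-by-term comparison over $a\in S$, even after a case analysis on the cross term, does not decompose this inequality: the terms $W_{\gamma}\overline{W}_{\gamma}$ couple the weights of different states, so the comparison is genuinely between two ratios of quadratic forms, not between matched summands.

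The paper's resolution, which is absent from your plan, is to view the expected gain as a function $f(w_{\gamma_1},\dots,w_{\gamma_r},u_{\gamma_1},\dots,u_{\gamma_r},q)$ of the \emph{aggregated parameters} and to show $f$ is non-decreasing along a continuous coordinate-wise increasing path in $\mathbb{R}^{2|\Gamma|+1}$ from $(\mathbf{W}',\mathbf{U}',Q-g(b'))$ to $(\mathbf{W},\mathbf{U},Q-g(b))$, by checking that all partial derivatives are non-negative. The order of the path matters (first increase $q$, then the $u_{\gamma}$, then the $w_{\gamma}$) because the sign of $\partial f/\partial w_{\gamma}$ is established only by using the invariant $u_{\gamma}\le q$, which holds throughout the path precisely because $U_{\gamma}\le Q-g(b)$ and $U'_{\gamma}\le Q-g(b')$ by monotonicity of $g$ and the goal-value property. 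Your phrase "non-decreasing along a path" refers instead to a discrete path in the lattice of partial realizations (telescoping single-coordinate refinements); that reduction is valid but unnecessary here, and it does not substitute for the parameter-space derivative argument that actually closes the inequality.
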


The consequence of \lemmaref{lem:gW} is that we may now use any
algorithm designed for adaptive submodular utility functions. This
gives us \theoremref{thm:agforasc}.

\begin{theorem}\label{thm:agforasc}
  Scenario Adaptive Greedy is an approximation algorithm for the
  Scenario Adaptive Submodular Cover problem that achieves an
  approximation factor of $O(\log QW)$, where $W$ is the sum of the
  weights on the realizations in $S$.
\end{theorem}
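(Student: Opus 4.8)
The proof will combine three ingredients, all of which are already in place: (1) Lemma~\ref{lem:gW}, which says that $g_W$ is adaptive submodular with respect to $\mathcal{D}_{S,w}$; (2) the fact that $g_W$ is monotone with goal value $QW$ (this follows from the standard OR construction applied to the monotone submodular functions $g$ and $h_W$, together with the observation that on any realization $a \in \Gamma^n$ either $g(a) = Q$ or $h_W(a) = W$, since $a \in S$ forces $\sum_{a' \in S : a' \succeq a} w(a') \ge w(a) > 0$ and in fact the only such $a'$ is $a$ itself when... — more carefully, we only need that $g(a)=Q$ holds for all $a$, so $g_W(a) = QW$ automatically); and (3) the $O(\log Q')$ approximation guarantee of Golovin and Krause's Adaptive Greedy for any monotone, adaptive submodular utility function with integer goal value $Q'$, here applied with $Q' = QW$.

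Concretely, I would proceed as follows. First, verify that $g_W$ has the properties required by Golovin--Krause: it is integer-valued (since $g$, $W$, and each $w(a)$ are integers, so $h_W$ and hence $g_W$ are integer-valued), monotone and submodular (standard OR construction preserves these — stated in Section~\ref{sec:defs}), has goal value $QW$ on all of $\Gamma^n$ (because $g(a)=Q$ for all $a \in \Gamma^n$ by the goal-value hypothesis on $g$, so the product form gives $g_W(a) = QW - 0 \cdot (W - h_W(a)) = QW$), and is adaptive submodular with respect to $\mathcal{D}_{S,w}$ (Lemma~\ref{lem:gW}). Then invoke the Golovin--Krause bound: Adaptive Greedy run on $g_W$ with goal $QW$ produces a strategy whose expected cost (with respect to $\mathcal{D}_{S,w}$) is at most $O(\log QW)$ times the expected cost of an optimal strategy for the Scenario SC problem on $g_W$. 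Finally, argue that this transfers to the original problem on $g$: any valid strategy for $g$ achieves $g(b) = Q$ on every realization, hence achieves $g_W(b) = QW$ on every realization, so it is a valid strategy for $g_W$; therefore the optimal cost for $g_W$ is at most the optimal cost for $g$, i.e.\ $\cost_{\mathrm{OPT}}(g_W) \le \cost_{\mathrm{OPT}}(g)$. And the third step of Scenario Adaptive Greedy — choosing remaining items after $g_W$ reaches $QW$, to force $g(b)=Q$ on realizations not in $S$ — adds no expected cost, since those realizations have probability $0$ under $\mathcal{D}_{S,w}$; it only guarantees validity (reaching goal $Q$ for $g$) on all realizations, as the problem statement requires. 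Chaining these inequalities gives expected cost $O(\log QW) \cdot \cost_{\mathrm{OPT}}(g_W) \le O(\log QW) \cdot \cost_{\mathrm{OPT}}(g)$.

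The one subtlety — and the place where the argument could go wrong if one is careless — is that the approximation bound for Adaptive Greedy is stated relative to the optimum \emph{for $g_W$}, not for $g$, so the burden is to show the optimum for $g_W$ does not exceed the optimum for $g$; this is exactly the "validity transfers upward" observation above, and it is the natural counterpart of the $c(g_S) \le c(g)$ step in the proof of Theorem~\ref{thm:mixedgreedy2}. I do not expect any real obstacle here, since Lemma~\ref{lem:gW} does all the heavy lifting; the remaining work is bookkeeping. The only thing to be careful about is that Golovin--Krause's theorem requires the goal value to be the value attained on \emph{all} realizations and that the utility function be integer-valued — both of which hold for $g_W$ with goal $QW$ — so the $\log$ in the bound is $\log(QW)$ rather than, say, $\log$ of the maximum per-realization marginal.
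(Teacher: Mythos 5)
Your proposal is correct and follows essentially the same route as the paper: establish monotonicity of $g_W$ via the OR construction, invoke Lemma~\ref{lem:gW} for adaptive submodularity, apply the Golovin--Krause bound to get an $O(\log QW)$ factor relative to the optimum for $g_W$, and then transfer to $g$ exactly as in the proof of Theorem~\ref{thm:mixedgreedy2} (optimal cost for $g_W$ is at most that for $g$, and the cleanup step on zero-probability realizations adds no expected cost). The paper's proof is just a terser version of the same argument, deferring the transfer step to the ``analogous argument'' in Theorem~\ref{thm:mixedgreedy2}.
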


\begin{proof}
  Since $g_W$ is produced by applying the OR construction to $g$ and
  $h_W$, which are both monotone, so is $g_W$.  By Lemma~\ref{lem:gW},
  $g_{W}$ is adaptive submodular with respect to the sample
  distribution.  Thus by the bound of Golovin and Krause on Adaptive
  Greedy, running that algorithm on $g_{W}$ yields an ordering of
  choices with expected cost that is at most a $O(\log QW)$ factor
  more than the optimal expected cost for $g_W$.  By the analogous
  argument as in the proof of Theorem~\ref{thm:mixedgreedy2}, it
  follows that Scenario Adaptive Greedy solves the Scenario Submodular
  Cover problem for $g$, and achieves an approximation factor of
  $O(\log QW)$.
\end{proof}

\acks{L. Hellerstein thanks Andreas Krause for useful discussions at ETH, and especially for
directing our attention to the bound of Streeter and Golovin for min-sum submodular cover.}

\bibliography{coltscenario}

\appendix

\section{Proof of Bound for Mixed Greedy}\label{app:mixedgreedy}

We first discuss the algorithm of Wolsey used in \texttt{FindBudget}.

\subsection{Wolsey's Greedy Algorithm for Budgeted Submodular Cover}
\label{sec:wolsey}
The Budgeted Submodular Cover problem takes as input a finite set
$N$ of items, a positive integer $B > 0$ called the
{\em budget}, a monotone submodular set function
$f:2^N \rightarrow \mathbb{Z}_{\geq 0}$, and a vector
$c$ indexed by the items in $N$, such that $c_i \in \mathbb{R}_{\geq 0}$ for all $i \in N$.
The problem is to find a subset
$R \subseteq N$ such that $\sum_{i \in R} c_i \leq B$, and $f(R)$ is
maximized.

\cite{wolsey82} developed a greedy approximation algorithm for this problem.  We present the pseudocode for this
algorithm here, together with Wolsey's approximation bound.

\begin{algorithm}
\label{alg:wolsey}
  \small \textbf{Procedure} \texttt{WolseyGreedy}($N,f,c,B$)
  \begin{algorithmic}[1]
    \STATE { $spent \leftarrow 0$, $R \leftarrow \emptyset$, $k \leftarrow 0$ }
    \REPEAT
    \STATE {$k \leftarrow k+1$}
    \STATE {Let $i_k$ be the $i \in N$ that minimizes $\frac{f(R \cup \{i\}) - f(R)}{c_i}$ among all $i \in N$ with $c_i \leq B$}
    \STATE {$N \leftarrow N - \{i\}$, $spent \leftarrow spent + c_i$, $R \leftarrow R \cup \{i_k\}$}
    \UNTIL {$spent > B$ or $N = \emptyset$}
    \IF {$f(\{i_k\}) \geq f(R - \{i_k\})$} 
    \STATE{return $\{i_k\}$ }
    \ELSE 
    \STATE {return $R - \{i_k\}$}
    \ENDIF
\end{algorithmic}
\end{algorithm}

\begin{lemma}[\cite{wolsey82}]
\label{lem:wolsey}
Let $R^*$ be the optimal solution to the Budgeted Submodular Cover problem on 
instance $(N,f,c,B)$.
Let $R = \{i_1, \ldots, i_k\}$ be the set of items chosen by running Wolsey-Greedy($N,f,c,B$).
Let $e$ be the base of the natural logarithm, and let $\chi$ be the solution to $e^{\chi} = 2-\chi$.
Then $f(R) \geq (1-e^{-\chi})f(R^*)$.
\end{lemma}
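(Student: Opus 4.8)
The plan is to run the classical analysis of greedy for monotone submodular maximization under a knapsack (budget) constraint, specialized to the rule in \texttt{WolseyGreedy}, and then to push the final one-variable optimization so that it lands exactly on Wolsey's constant rather than the weaker $\tfrac12(1-1/e)$. Let $i_1,\dots,i_k$ be the items added inside the greedy loop, in the order chosen, set $R_j:=\{i_1,\dots,i_j\}$ with $R_0:=\emptyset$, and write $\mathrm{OPT}:=f(R^*)$. Here $R:=R_k$ is the set accumulated by the loop, and the algorithm returns whichever of $\{i_k\}$ and $R-\{i_k\}=R_{k-1}$ has the larger $f$-value; call that returned value $\mathrm{ALG}$, and note it suffices to lower bound $\mathrm{ALG}$ since $f(R_k)\ge\mathrm{ALG}$ by monotonicity. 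Because $R^*$ is feasible and costs are nonnegative, every $i\in R^*$ has $c_i\le\sum_{i'\in R^*}c_{i'}\le B$, so every item of $R^*$ is eligible at every greedy step. The first step is the standard greedy inequality
\[
f(R_j)-f(R_{j-1})\;\ge\;\frac{c_{i_j}}{B}\bigl(\mathrm{OPT}-f(R_{j-1})\bigr),
\]
which I would get by writing $\mathrm{OPT}\le f(R^*\cup R_{j-1})\le f(R_{j-1})+\sum_{i\in R^*\setminus R_{j-1}}\bigl(f(R_{j-1}\cup\{i\})-f(R_{j-1})\bigr)$ (monotonicity, then submodularity), bounding each marginal gain by $\tfrac{c_i}{c_{i_j}}\bigl(f(R_j)-f(R_{j-1})\bigr)$ since $i_j$ maximizes the gain-to-cost ratio over eligible items, and summing with $\sum_{i\in R^*\setminus R_{j-1}}c_i\le B$.

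Next I would unroll the recursion. With $\delta_j:=\mathrm{OPT}-f(R_j)$ the inequality becomes $\delta_j\le\delta_{j-1}\bigl(1-c_{i_j}/B\bigr)$, and since each factor lies in $[0,1]$ this telescopes to $\delta_j\le\mathrm{OPT}\prod_{l=1}^j\bigl(1-c_{i_l}/B\bigr)$. Let $\rho:=c_{i_k}/B\in(0,1]$. The loop runs until $spent>B$, so $\sum_{l=1}^{k-1}c_{i_l}>B-c_{i_k}$, hence $\sum_{l=1}^{k-1}c_{i_l}/B>1-\rho$; combining this with $1-x\le e^{-x}$ gives
\[
f(R_{k-1})\;>\;\mathrm{OPT}\bigl(1-e^{-(1-\rho)}\bigr),\qquad f(R_k)\;>\;\mathrm{OPT}\bigl(1-(1-\rho)e^{-(1-\rho)}\bigr).
\]
By submodularity of $f$, $f(R_k)-f(R_{k-1})\le f(\{i_k\})-f(\emptyset)\le f(\{i_k\})$, so $f(R_{k-1})+f(\{i_k\})\ge f(R_k)$, and therefore
\[
\mathrm{ALG}\;\ge\;\max\Bigl(f(R_{k-1}),\ \tfrac12\bigl(f(R_{k-1})+f(\{i_k\})\bigr)\Bigr)\;\ge\;\mathrm{OPT}\cdot\max\Bigl(1-e^{-(1-\rho)},\ \tfrac12\bigl(1-(1-\rho)e^{-(1-\rho)}\bigr)\Bigr).
\]

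The last step is to minimize this over $\rho$. Writing $s:=1-\rho\in[0,1)$, the first term $1-e^{-s}$ is increasing in $s$ while the second term $\tfrac12(1-se^{-s})$ is decreasing in $s$ (the derivative of $se^{-s}$ is $(1-s)e^{-s}\ge0$), so the worst case occurs where the two coincide; setting $1-e^{-s}=\tfrac12(1-se^{-s})$ and clearing denominators gives $(2-s)e^{-s}=1$, i.e.\ $e^{s}=2-s$, which is exactly the equation defining $\chi$. Hence the minimum value of the maximum is $1-e^{-\chi}$, giving $f(R)\ge\mathrm{ALG}\ge(1-e^{-\chi})\,\mathrm{OPT}$. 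Two degenerate cases need only a trivial word: if the loop exits because $N$ becomes empty rather than because the budget is overspent, then $R$ contains every item of cost at most $B$ and hence all of $R^*$, so $f(R)\ge\mathrm{OPT}$ and the $\tfrac12$-argument still yields at least $\tfrac12\mathrm{OPT}>(1-e^{-\chi})\mathrm{OPT}$; and $k=1$ forces $N=\emptyset$ at exit, already covered. I expect this final optimization to be the main obstacle: a crude version that only bounds $f(R_k)$ and uses $\mathrm{ALG}\ge\tfrac12 f(R_k)$ gives only $\tfrac12(1-1/e)\approx0.316$, so the key is to keep $\rho$, the normalized cost of the last greedy item, as a free parameter throughout and trade off the ``$R_{k-1}$ is already good'' bound against the ``split $R_k$ into $R_{k-1}$ and $i_k$'' bound to reach Wolsey's $1-e^{-\chi}\approx0.35$.
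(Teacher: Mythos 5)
Your proposal is correct, but there is nothing in the paper to compare it against: the paper imports Lemma~\ref{lem:wolsey} as a black box from Wolsey (1982) and never proves it. Your reconstruction is the standard (and sound) analysis: the per-step inequality $f(R_j)-f(R_{j-1})\ge \frac{c_{i_j}}{B}\bigl(\mathrm{OPT}-f(R_{j-1})\bigr)$ (valid because every item of $R^*$ has cost at most $B$ and so is always eligible), the telescoped bound $\delta_j\le \mathrm{OPT}\prod_l(1-c_{i_l}/B)$, and then the key move of keeping $s=1-c_{i_k}/B$ as a free parameter so that the ``prefix is already good'' bound $1-e^{-s}$ (increasing in $s$) can be traded against the ``average of prefix and last singleton'' bound $\tfrac12(1-se^{-s})$ (decreasing in $s$); equating them gives $e^{s}=2-s$, i.e.\ $s=\chi$, and the common value $(1-\chi)/(2-\chi)=1-e^{-\chi}$, which is exactly Wolsey's constant. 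The degenerate exits (loop ends with no eligible items left, or $k=1$) are handled correctly, since then $R\supseteq R^*$ and the averaging step alone gives $\tfrac12\,\mathrm{OPT}$. Two cosmetic points worth flagging: the paper's pseudocode says the greedy step ``minimizes'' the ratio, which you rightly read as a typo for ``maximizes'' (the argument needs maximization), and the lemma's $R$ is literally the accumulated set while the algorithm returns $\max\{f(R-\{i_k\}),f(\{i_k\})\}$; your bound on the returned value implies the stated bound on $f(R)$ by monotonicity, so either reading is covered.
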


\subsection{Analysis of Mixed Greedy}

%
%



Consider a Scenario SC instance $(g,Q,S,w,c)$, and
a partial realization $b \in (\Gamma \cup \{*\})^n$.
We now consider \texttt{MixedGreedy}($g,Q,S,w,c,b$).  
It
constructs a tree for the Scenario SC instance 
{\em induced by} $b$.  
In this induced instance, the item set is 
$N' = \{i \mid b_i = *\}$.
Without loss of generality, assume that
$N' = \{1, \ldots, n'\}$ for some $n'$.
For $d \in (\Gamma \cup \{*\})^n$ such that $d \succeq b$, define $\nu(d)$ be the restriction of $d$
to the items in $N'$.  
For $d' \in (\Gamma \cup \{*\})^{n'}$,
$\nu^{-1}(d')$ denotes the extension $d \succeq d'$ to all elements in $N$ such that
$d_i = d'_i$ for $i \in N'$ and $d_i = b_i$ otherwise.

The utility function $g':(\Gamma \cup \{*\})^{n'} \rightarrow \mathbb{Z}_{\geq 0}$ for the instance induced by $b$
is a function on partial realizations $d'$ of the items in $N'$.
Specifically,
for $d' \in (\Gamma \cup \{*\})^{n'}$,
$g'(d') = g(\nu^{-1}(d'))$.
The sample
$S'$ in the induced instance consists of the restrictions of the
realizations in $\{a \in S \mid a \succeq b\}$ to the items in $N'$.
That is, $S' = \{\nu(a) \mid a \in S, a \succeq b\}$.
Note that
each realization in $S'$ corresponds to a unique realization in $S$.
The weight function $w'$ for the induced instance 
is such that for all $d' \in S'$,
$w'(d') = w(\nu^{-1}(d'))$.
The goal value for the induced instance is $Q$.

If $g(b) = Q$, then \texttt{MixedGreedy}($g,Q,S,w,c,b$) returns the optimal tree for
the instance induced by $b$, which is a single (unlabeled) leaf with expected cost 0.
Assume $g(b) < Q$.

For any decision tree $\tau$ for the induced instance and any
realization $a$ defined over the item set $N'$ (or over any superset
of $N'$), let $\kappa(\tau,a) = \sum_{i \in M}c_i$, where $M$ is the
set of items labeling the nodes on the root-leaf path followed in
$\tau$ on realization $a$.  That is, $\kappa(\tau,a)$ is the cost
incurred when using tree $\tau$ on realization $a$.

Let $\tau^*$ be a decision tree that is an optimal solution for the induced instance.
Let $C^*=\mathbb{E}[\kappa(\tau^*,a)]$ where $a$ is a random realization drawn
from $D_{S',w'}$.  Thus $C^*$ is the expected cost of an optimal solution to the
induced instance.
Let $\tau^G$ denote the tree
output by running \texttt{MixedGreedy}($g,Q,S,w,c,b$).

Let $\sigma \in \Gamma^{n'}$ be
such that for $i \in N'$, $\sigma_i = \argmin\limits_{\gamma\in\Gamma} g(b_{i\leftarrow \gamma})$.
Thus, $\sigma$ is the realization whose entries are computed in Step~\ref{step:setsigma}
of \texttt{MixedGreedy}.

For each node $v$ in the tree $\tau^G$, let $\tilde{p}(v)$ denote the
probability that node $v$ will be reached when using $\tau^G$ on a
random realization $a$ drawn from $D_{S',w'}$. Let $c_v = c_i$ where $i$
is the item labeling node $v$.  Consider the backbone constructed
during the call to \texttt{MixedGreedy}($g,Q,S,w,c,b$).  The backbone consists
of the nodes created during the two repeat loops in this call, excluding the recursive calls.
Let $Y$ be the set of nodes in the backbone.  Let
$c_Y = \sum_{v \in Y}\tilde{p}(v)c_v$.  Thus $c_Y$ is the contribution
of the nodes in the backbone to the expected cost of tree $\tau^G$.
The following lemma says that this contribution is no more than a
constant times the expected cost of the optimal tree $\tau^*$.

\begin{lemma}
\label{lem:newlem}
$c_Y \leq 24C^{*}$.
\end{lemma}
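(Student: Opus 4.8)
The plan is to charge the backbone cost $c_Y$ against $C^*$ using the ``budget'' $B$ returned by \texttt{FindBudget}, combined with a potential/telescoping argument in the style of Streeter and Golovin's analysis of Min-Sum Submodular Cover. The key point is that $B$ is, up to the constant $\alpha = 1-e^{-\chi}$, a lower bound on the minimum cost of a strategy that achieves an $\alpha$-fraction of the goal value starting from $b$; by Wolsey's Lemma~\ref{lem:wolsey}, if some set of items of total cost $\le B$ achieved utility less than $\alpha g'(N')$ then \texttt{FindBudget} would have chosen a smaller $B$. So first I would establish the ``lower bound'' direction: any decision tree achieving goal value $Q$ (in particular the optimal $\tau^*$ for the induced instance) must, with constant probability over $D_{S',w'}$, incur cost at least a constant times $B$. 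Concretely, on the realization $\sigma$ (or on the subset of sample realizations that agree with $\sigma$ on the items queried so far), $\tau^*$ must spend at least roughly $B$ to bring $g$ close enough to $Q$; otherwise a cheaper-than-$B$ set would already reach $\alpha Q$, contradicting the minimality of $B$. This gives $C^* = \Omega(B \cdot p)$ where $p$ is the probability mass of sample realizations reaching the relevant subtree — but since we are analyzing the single call on $b$, we may assume $b$ itself is reached (i.e.\ normalize so that probability is $1$ in the induced instance), so $C^* \ge c \cdot B$ for an absolute constant $c$.

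Second, I would bound $c_Y$ from above by $O(B)$. The backbone is built in two repeat loops, each of which stops as soon as the cost spent reaches $B$, and in each loop only items with $c_i \le B$ are ever selected; hence the total cost along any root-to-leaf segment of the backbone is at most $B + \max_i c_i \le 2B$ in each loop, so at most $4B$ over both loops. The subtlety is that $c_Y = \sum_{v\in Y}\tilde p(v) c_v$ is a \emph{weighted} sum: a node $v$ deep in the backbone is reached only by realizations that agreed with $\sigma$ at every earlier backbone node, so $\tilde p(v) \le 1$, and in fact the weighted total is dominated by the (unweighted) worst case, giving $c_Y \le 4B$. Combining with $C^* \ge cB$ yields $c_Y \le (4/c)\,C^*$, and tracking the constants ($\alpha \approx 0.35$ from \texttt{FindBudget}, the ``can't both exceed half'' style splits, the $+\max_i c_i$ slack in each of the two loops) gives the stated bound $c_Y \le 24 C^*$.

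The main obstacle is making the lower bound $C^* = \Omega(B)$ precise. The definition of $B$ via \texttt{FindBudget} is in terms of \emph{Wolsey's greedy output} reaching $\alpha g'(N')$ within budget $B$, not directly in terms of the optimum of Budgeted Submodular Cover; I need Lemma~\ref{lem:wolsey} to translate ``greedy with budget $B-1$ fails to reach $\alpha g'(N')$'' into ``\emph{no} set of cost $\le B-1$ reaches $g'(N')$ itself'' (using $g'(\text{opt with budget } B{-}1) < g'(N')$ after dividing by the Wolsey factor, hence that budget is insufficient to cover $g$), and then argue that $\tau^*$, on the realization where every unqueried item takes value $\sigma_i$, behaves like a feasible solution to this budgeted-cover instance and therefore must spend more than $B-1$. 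I expect this step — relating the adaptive optimal tree to the non-adaptive budgeted-cover lower bound encoded in $B$ — to require the most care, essentially mirroring the corresponding argument in Cicalese et al.\ but now for a general monotone submodular $g$ rather than their specific Pairs function; the rest is bookkeeping of constants.
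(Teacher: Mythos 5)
Your upper bound $c_Y \leq \sum_{v\in Y} c_v \leq 4B$ is fine (it appears as a separate lemma in the paper), but the lower bound $C^* \geq c\cdot B$ for an absolute constant $c$ is false, and this is where the argument breaks. What the budget construction actually yields (and what the paper proves) is $\kappa(\tau^*,\sigma)\ge B$: the cost that the optimal tree incurs on the \emph{single} realization $\sigma$ is at least $B$. But $C^*$ is the expectation of $\kappa(\tau^*,a)$ over the sample distribution, and $\sigma$ need not lie in the sample at all, let alone carry constant probability mass. If almost all of the weight sits on realizations that deviate from the $\sigma$-path at the very first query (and are then covered cheaply), $C^*$ can be arbitrarily smaller than $B$ while $B$ remains large, since $B$ is determined by how much cost is needed to raise $g$ by $\alpha(Q-g(b))$ along $\sigma$ and is independent of the weights. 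Your ``normalize so that probability is $1$ in the induced instance'' step does not repair this: normalization rescales $c_Y$ and $C^*$ by the same factor and does not concentrate mass on $\sigma$. The lemma survives in this regime only because $c_Y$ is itself weighted by the survival probabilities $\tilde p(v)$, which also collapse when mass deviates early; a correct proof must compare the two weighted quantities to each other rather than route both through the unweighted quantity $B$.

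That is what the paper's proof does. It introduces the monotone submodular set function $h_p(R) = 1-\sum_{a\succeq \sigma^R} p(a)$ and observes that $c_Y$ is exactly the Min-Sum Submodular Cover cost, with respect to $h_p$, of the backbone schedule, while $C^*$ is the min-sum cost of the schedule of items that $\tau^*$ queries along $\sigma$. The budget $B$ then enters only in two places: to bound the second-stage contribution to the backbone cost by twice the first-stage contribution (a factor of $3$ overall), and to truncate the comparison schedule at length $B$ so that a Streeter--Golovin-type greedy guarantee (Lemma~\ref{lem:sgb}, a factor of $8$) applies to the first stage, which is greedy with respect to $h_p$ over items of cost at most $B$. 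The product $3\times 8$ gives the constant $24$. The missing ingredient in your proposal is this reduction of both $c_Y$ and $C^*$ to min-sum cover costs of schedules under $h_p$; without it, charging the backbone against $B$ and $B$ against $C^*$ cannot be made to work.
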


Lemma~\ref{lem:newlem} is the key technical lemma
in our analysis, and it is the proof of this lemma that constitutes
the major difference between our analysis and the analysis in~\cite{cicaleseLaberSaettler}.
We defer the proof of this lemma to Section~\ref{sec:newlemproof}.
Using this lemma, it is easy to generalize
the rest of the analysis of Cicalese et al.\ to obtain the proof of Theorem~\ref{thm:mixedgreedy1}.
The proofs in the remainder of this section closely follow the proofs in Cicalese et al.
We present them so that this paper will be self-contained.

Let $B$ be the budget that is computed in Line~\ref{line:budget}, with \texttt{FindBudget}, when running
\texttt{MixedGreedy}($g,Q,S,w,c,b$).   
Recall the constant $\alpha$ defined in \texttt{FindBudget}, based on the bound on Wolsey's
Greedy algorithm (Lemma~\ref{lem:wolsey}).

\begin{lemma}
  The condition at the end of the first repeat loop (spent $\geq B$)
  will be satisfied. Also, $\kappa(\tau^*,\sigma) \geq B$.
\end{lemma}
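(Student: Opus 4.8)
The plan is to compare the budget $B$ returned by \texttt{FindBudget} — which, by construction, is the smallest value in $[0,\sum_{i\in N'}c_i]$ for which Wolsey's greedy applied to $(N',g',c)$ recovers at least an $\alpha$ fraction of $g'(N')$ — against the costs of item subsets of $N'$ that attain the full value $g'(N')$, using Wolsey's guarantee (Lemma~\ref{lem:wolsey}). Two preliminary observations: writing $b_U$ for the extension of $b$ obtained by setting $b_i=\sigma_i$ for $i\in U\subseteq N'$, the vector $b_{N'}$ is a full realization in $\Gamma^n$, so $g(b_{N'})=Q$ and therefore $g'(N')=g(b_{N'})-g(b)=Q-g(b)>0$ (we are in the case $g(b)<Q$). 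Also, letting $M\subseteq N'$ be the set of items labeling the root--leaf path that $\tau^*$ follows on the realization $\sigma$, we have $\kappa(\tau^*,\sigma)=\sum_{i\in M}c_i$; since $\tau^*$ is valid for the induced instance, the leaf reached on $\sigma$ carries the partial realization $d$ with $d_i=\sigma_i$ on $M$, $d_i=*$ off $M$, and $g'(d)=Q$, and unwinding the definition of the induced instance gives $\nu^{-1}(d)=b_M$, hence $g'(M)=g(b_M)-g(b)=g'(d)-g(b)=Q-g(b)=g'(N')$. So $M$ is a subset of $N'$ of cost exactly $\kappa(\tau^*,\sigma)$ that attains the maximum value of the monotone submodular set function $g'$.

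To prove $\kappa(\tau^*,\sigma)\geq B$, set $B':=\kappa(\tau^*,\sigma)=\sum_{i\in M}c_i$. Each item of $M$ has cost at most $B'$, so $M$ is feasible for Budgeted Submodular Cover on $(N',g',c,B')$; and since $g'(M)=g'(N')$ is the largest value $g'$ takes, the optimum of that instance equals $g'(N')$. By Lemma~\ref{lem:wolsey}, Wolsey's greedy run with budget $B'$ returns a set of value at least $(1-e^{-\chi})\,g'(N')=\alpha\,g'(N')$. Hence $B'$ lies in the search interval and satisfies the stopping condition of \texttt{FindBudget}, and since $B$ is the smallest such value, $B\leq B'=\kappa(\tau^*,\sigma)$.

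Finally, the first repeat loop spends only on items of $I=\{i\in N':c_i\leq B\}$ and stops the first time its accumulated spending reaches $B$; so it is enough to show $\sum_{i\in I}c_i\geq B$. (Note $I\neq\emptyset$, since otherwise Wolsey's greedy with budget $B$ would return the empty set, of value $0<g'(N')$, contradicting the defining property of $B$.) Suppose for contradiction that $B_0:=\sum_{i\in I}c_i<B$. Then for every $i\in I$ we have $c_i\leq B_0<B$, and conversely every $i\in N'$ with $c_i\leq B_0$ lies in $I$, so $\{i\in N':c_i\leq B_0\}=I$; that is, Wolsey's greedy can afford exactly the same items under budget $B_0$ as under budget $B$, and since the total cost of those items is $B_0$, neither run ever exceeds its budget, so the two runs make identical greedy choices in the same order and return the same set, of the same utility. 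But Wolsey's greedy with budget $B$ has value at least $\alpha\,g'(N')$ by the choice of $B$, so Wolsey's greedy with the strictly smaller budget $B_0$ does too, contradicting the minimality of $B$. Hence $\sum_{i\in I}c_i\geq B$, and the loop reaches its stopping condition before exhausting $I$. The main subtlety lies in this last step: one must notice that the only way the loop can fail is to run out of items in $I$, and then exclude that by observing that lowering the budget from $B$ to $\sum_{i\in I}c_i$ changes neither the pool of affordable items nor, therefore, the output of Wolsey's greedy, contradicting the definition of $B$.
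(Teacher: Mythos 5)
Your proof is correct, and your argument for $\kappa(\tau^*,\sigma)\geq B$ --- exhibiting the items on the path $\tau^*$ follows on $\sigma$ as a set achieving the full value $Q-g(b)$ within budget $\kappa(\tau^*,\sigma)$, then invoking Wolsey's bound (Lemma~\ref{lem:wolsey}) and the minimality of $B$ --- is essentially the paper's argument, just spelled out more explicitly. The paper's own proof does not explicitly address the first claim (that the loop can actually reach $spent \geq B$, i.e., $\sum_{i\in I}c_i \geq B$); your extra observation that a budget of $\sum_{i\in I}c_i$ would leave both the affordable item pool and hence Wolsey's greedy run unchanged, contradicting the minimality of $B$, correctly fills in that step which the paper leaves implicit.
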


\begin{proof}
  Trees $\tau^G$ and $\tau^*$ must achieve utility $Q-g(b)$ on
  realization $\sigma$.  The binary search procedure in \texttt{FindBudget}
  finds the least budget $B$ allowing Wolsey's greedy algorithm to
  achieve a total increase in utility of at least
  $\alpha (Q - g(b))$, on realization $\sigma$.  It follows from the
  bound on Wolsey's greedy algorithm (Lemma~\ref{lem:wolsey}) that on
  realization $\sigma$, an increase of $\alpha(Q - g(b))$ could not be
  achieved with a budget smaller than $B$. Thus,
  $\kappa(\tau^*, \sigma) \geq B$.
\end{proof}


The next lemma clearly holds because in the two repeat loops, 
we only consider items of cost at most $B$, and we 
continue choosing items of cost at most $B$ until a budget of $B$ is met or exceeded.

\begin{lemma}
$\sum_{v \in Y} c_v\leq 4B$.
\end{lemma}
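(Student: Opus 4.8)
The plan is to split the backbone $Y$ according to which of the two repeat loops created each node: let $Y_1$ be the set of nodes produced by the first repeat loop and $Y_2$ the set produced by the second. Since the recursive \texttt{MixedGreedy} calls inside the \texttt{for} loops attach whole subtrees but do not themselves add nodes to the current backbone, $Y = Y_1 \cup Y_2$ is a disjoint union, and it suffices to show $\sum_{v \in Y_1} c_v \le 2B$ and $\sum_{v \in Y_2} c_v \le 2B$.

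For the first loop I would track the variable $spent$. It starts at $0$, and in the iteration that places item $i$ on the backbone it is incremented by exactly $c_i$; hence at termination of the loop, $\sum_{v \in Y_1} c_v$ equals the final value of $spent$. Every item $i$ chosen in this loop is taken from the set $I = \{i \in N' \mid c_i \le B\}$, and items are only ever removed from $I$, never added, so each such $i$ satisfies $c_i \le B$. By the previous lemma the loop does terminate through its stated condition $spent \ge B$; consider the iteration in which this first holds. Just before that iteration the value of $spent$ was still $< B$ (otherwise the loop would already have stopped), and the item added in it costs at most $B$, so the final value of $spent$ is strictly less than $2B$. This gives $\sum_{v \in Y_1} c_v < 2B$.

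The second loop is handled in exactly the same way, now tracking $spent_2$: all items are again drawn from (a subset of) $I$ and hence cost at most $B$, and the loop stops as soon as $spent_2 \ge B$ or $I = \emptyset$. In either case $spent_2 < B$ immediately before the final iteration, so its final value --- which equals $\sum_{v \in Y_2} c_v$ --- is less than $2B$ (and it is $0$ if this loop contributes no backbone nodes). Summing the two estimates yields $\sum_{v \in Y} c_v < 4B$, which implies the claim.

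There is essentially no real obstacle here; the only points needing care are confirming that the bookkeeping variables $spent$ and $spent_2$ account for exactly the backbone cost contributed by their respective loops, and that the first loop cannot terminate by exhausting $I$ while $spent < B$ --- the latter being precisely what the preceding lemma rules out.
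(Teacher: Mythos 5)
Your proof is correct and follows the same reasoning the paper uses (the paper simply states the lemma ``clearly holds'' because both loops only consider items of cost at most $B$ and stop once a budget of $B$ is met or exceeded); you have merely spelled out the $2B + 2B$ accounting explicitly. No issues.
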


Let $b^{final}$ denote the final value of $b$ in the last recursive call, in Line~\ref{step:lastrec},
when running \texttt{MixedGreedy}($g,Q,S,w,c,b$).

\begin{lemma}
\label{lem:bfinal}
$g(b^{final}) \geq g(b) + \frac{1}{9}(Q - g(b))$.
\end{lemma}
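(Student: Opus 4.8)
The plan is to reduce the claim to an estimate about the set function $g'$ that the call \texttt{MixedGreedy}$(g,Q,S,w,c,b)$ works with internally. Write $b$ for the (initial) partial realization passed to this call, $N' = \{i : b_i = *\}$, and $g'(U) = g(b_U) - g(b)$ for $U \subseteq N'$, where $b_U$ extends $b$ by setting $b_i = \sigma_i$ on $U$. Since $b_{N'} \in \Gamma^n$ has utility $Q$, we have $g'(N') = Q - g(b)$, so the statement is equivalent to $g(b^{final}) - g(b) = g'(P) \geq \tfrac19\, g'(N')$, where $P = R_1 \cup R_2$ with $R_1$ (resp.\ $R_2$) the items selected in the first (resp.\ second) repeat loop of \texttt{MixedGreedy}: each such item $i$ is assigned $\sigma_i$ in $b$, so $b^{final} = b_P$. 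The starting point is the set $R^*$ that \texttt{WolseyGreedy}$(N', g', c, B)$ returns inside \texttt{FindBudget}: by the defining property of $B$, $g'(R^*) \geq \alpha\, g'(N')$, and, straight from the code of \texttt{WolseyGreedy} (it only ever selects items of cost $\leq B$, and its accumulated cost is $\leq B$ just before the last pick), $R^* \subseteq I_0 := \{i \in N' : c_i \leq B\}$ and $\sum_{i \in R^*} c_i \leq B$.

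I would then split on why the second repeat loop terminates. If it terminates because $I = \emptyset$, then every item of $I_0$ was selected in one of the two loops, so $P \supseteq I_0 \supseteq R^*$, and monotonicity of $g'$ gives $g'(P) \geq g'(R^*) \geq \alpha\, g'(N') > \tfrac19 g'(N')$, since $\alpha = 1 - e^{-\chi} \approx 0.35$.

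The substantive case is when the second loop stops with its accumulated cost $spent_2 \geq B$. Put $g'_1(U) := g'(R_1 \cup U) - g'(R_1)$ for $U \subseteq I_0 \setminus R_1$; monotonicity and submodularity of $g$ pass to $g'$ and then to $g'_1$, and $g'_1(\emptyset) = 0$. Unwinding the pseudocode shows that at each iteration of the second loop the maximized ratio $\Delta g(b,i,\sigma_i)/c_i$ is exactly the bang-for-buck ratio of $i$ for $g'_1$, and the candidate set is exactly $I_0 \setminus R_1$ with the loop's earlier picks removed; so the second loop is bang-for-buck greedy on $g'_1$, run until accumulated cost reaches $B$. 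Applying the standard greedy analysis against $T := R^* \setminus R_1$ — whose items all stay admissible (cost $\leq B$, not in $R_1$) and satisfy $\sum_{i \in T} c_i \leq B$, so that submodularity lets each greedy step shrink $g'_1(T)$ minus the current value by a factor $1 - c_{i_\ell}/B$ — yields, once $spent_2 \geq B$, that $g'_1(R_2) \geq (1 - e^{-spent_2/B})\,g'_1(T) \geq (1 - e^{-1})\,g'_1(T)$. Since $R_1 \cup T \supseteq R^*$, monotonicity gives $g'_1(T) \geq g'(R^*) - g'(R_1)$, hence
\[
g'(P) = g'(R_1) + g'_1(R_2) \;\geq\; g'(R_1) + (1-e^{-1})\bigl(g'(R^*) - g'(R_1)\bigr) \;\geq\; (1-e^{-1})\,g'(R^*) \;\geq\; (1-e^{-1})\,\alpha\, g'(N'),
\]
and $(1-e^{-1})\alpha \approx 0.22 > \tfrac19$ finishes this case.

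The step I expect to demand the most care is the bookkeeping behind the greedy estimate in the last case: verifying that the items of $R^* \setminus R_1$ remain valid greedy candidates at every iteration of the second loop (so the greedy ratio dominates theirs and the contraction-by-$(1-c_{i_\ell}/B)$ inequality is licensed), that \texttt{WolseyGreedy}'s output genuinely lies among the cost-$\leq B$ items with total cost $\leq B$, and that the loop's overshoot only helps (the accumulated cost is $\geq B$, never truncated below it). Everything else — the two constant inequalities $\alpha > \tfrac19$ and $(1-e^{-1})\alpha > \tfrac19$, and the monotonicity manipulations — is routine; in fact any constant below $(1-e^{-1})\alpha$ would serve in place of $\tfrac19$.
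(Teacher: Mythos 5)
Your proof is correct and follows essentially the same strategy as the paper's: compare the two greedy stages against the budget-$B$ certificate set produced inside \texttt{FindBudget}, use monotonicity to pass the first-loop gain plus the certificate's residual value through, and invoke a budgeted-greedy guarantee for the second loop. The only differences are cosmetic — you derive the second-stage bound via the $(1-e^{-1})$ contraction argument instead of re-citing Wolsey's lemma (giving constant $(1-e^{-1})\alpha$ rather than the paper's $\alpha^{2}$, both exceeding $\tfrac19$) and you explicitly handle the $I=\emptyset$ termination case, which the paper leaves implicit.
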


\begin{proof}
  Recall that $N'=\{1, \ldots, n'\}$. For any $D \subseteq N'$, let
  $\hat{\sigma}^D$ denote the extension of $b$, to
  $(\Gamma \cup \{*\})^{n}$, such that
  $\hat{\sigma}^D_i = \sigma_i$ (as specified in line~\ref{line:sigmadef} of
  \texttt{MixedGreedy}($g,Q,S,w,c,b$)) for $i \in D$, and
  $\hat{\sigma}^{D}_{i}=b_{i}$ otherwise.

It follows from the way that $B$ was computed in \texttt{FindBudget},
and the fact that the value of $g$ is $Q$ on any (full) realization of the items in $N$,
that there is
a subset $L\subseteq N'$ such that $\sum_{i \in L} c_i = B$
and $g(\hat{\sigma}^L) \geq \alpha (Q-g(b)) + g(b)$.

Let $Y_1$ and $Y_2$ be the set of items $i$ chosen in the first and
second repeat loops respectively.  Thus
$b^{final} = \hat{\sigma}^{Y_1 \cup Y_2}$.

Let $d_{1} = g(\hat{\sigma}^{Y_{1}}) - g(b)$ represent the utility
gained in the first repeat loop. Let
$d_{2}=g(\hat{\sigma}^{Y_{1}\cup L}) - g(\hat{\sigma}^{Y_{1}})$
represent the additional utility that the items in $L\setminus Y_{1}$
would provide.  Since
$g(\hat{\sigma}^{L}) \geq \alpha (Q-g(b)) + g(b)$ and $g$ is monotone,
$g(\hat{\sigma}^{Y_1 \cup L}) \geq g(\hat{\sigma}^L)$, and thus
$g(\hat{\sigma}^{Y_1 \cup L}) \geq \alpha(Q - g(b))+g(b)$. So
$d_{1}+d_{2} \geq \alpha(Q-g(b))$. At the end of the first repeat loop
the items in $Y_1$ have been chosen.  If we were to add the items in
$L \setminus Y_1$ to those in $Y_1$, it would increase the utility by
$d_{2}\geq\alpha(Q-g(b)) - d_{1}$.  Since the items in the second
repeat loop are chosen greedily with respect to $g$ (and $c$) until
budget $B$ is met or exceeded, or goal value $Q$ is attained, it
follows by the approximation bound on Wolsey's algorithm
(Lemma~\ref{lem:wolsey}) that the amount of additional utility added
during the second repeat loop is at least $\alpha$ times the amount of
additional utility that would be added by instead choosing the items
in $L \setminus Y_1$.  We thus have
$g(\hat{\sigma}^{Y_{1}\cup Y_{2}}) - g(\hat{\sigma}^{Y_{1}}) \geq
\alpha d_{2}$.
Adding $d_1$ to both sides, from the definition of $d_1$ we get
$g(\hat{\sigma}^{Y_{1}\cup Y_{2}}) - g(b) \geq d_{1} + \alpha
d_{2}$.
We know from above that $d_{2} \geq \alpha (Q-g(b))-d_{1}$ so we have
$g(\hat{\sigma}^{Y_{1}\cup Y_{2}})-g(b)\geq d_{1} +
\alpha\left(\alpha\left(Q-g(b)\right)-d_{1}\right) \geq
d_{1}+\alpha^{2}\left(Q-g(b)\right)-\alpha d_{1} \geq
\alpha^{2}\left(Q-g(b)\right)$.
The lemma follows because the constant $\alpha^2$ is greater than
$\frac{1}{9}$.
\end{proof}

We can now give the proof of Theorem~\ref{thm:mixedgreedy1}, stating that the
Mixed Greedy algorithm
achieves an approximation factor of $O(\frac{1}{\rho}\log Q)$.

\bigskip
\begin{proof}\proofof{\theoremref{thm:mixedgreedy1}}
The Mixed Greedy algorithm solves the Scenario SC instance $(g,Q,S,w,c)$
by running recursive function \texttt{MixedGreedy}($g,Q,S,w,c,b$). 
In the initial call, $b$ is set to $*^n$.


Let $\tau^G$ denote the tree that is output by
running \texttt{MixedGreedy}($g,Q,S,w,c,b$). 
Let $\tau^*$ denote the optimal tree for the Scenario SC instance
induced by $b$.

The
expected cost of $\tau^G$ can be broken into the part that is due to
costs incurred on items in the backbone in the top-level
call to the \texttt{MixedGreedy} function, and costs incurred in the
subtrees built in the recursive calls to \texttt{MixedGreedy}.  The recursive
calls in Steps~\ref{step:rec1} and~\ref{step:rec2} build subtrees of
$\tau^G$ that are rooted at a $\gamma$-child of a node labeled $i$,
such that $\gamma \neq \sigma_i$.  It follows from the definition of
$\rho$ that the value of the partial realization used in each of these
recursive calls, $b_{i \leftarrow \gamma}$ is such that
$g(b_{i \leftarrow \gamma}) - g(b) \geq \rho (Q-g(b))$, so
$g(b_{i \leftarrow \gamma}) \geq \rho (Q-g(b))+g(b)$,

The remaining recursive call is performed on $b^{final}$, 
and by Lemma~\ref{lem:bfinal}, $g(b^{final}) \geq \frac{1}{9}(Q-g(b))$.

Let $\eta = \min \{\rho,\frac{1}{9}\}$.  Let $b^1, \ldots, b^t$ denote
the partial realizations on which the recursive calls are made,
and for which the value of $g$ on the partial realization is strictly less than $Q$.
These are the recursive calls which result in the construction of non-trivial subtrees,
with non-zero cost.
Note that $b^1, \ldots, b^t$ may include $b^{final}$. 
For all $j \in \{1, \ldots, t\}$,
$g(b^j) \geq \eta(Q - g(b))+ g(b)$, or equivalently
\begin{equation}
\label{eq:Qreduction}
Q -g(b^j) \leq (1-\eta)(Q-g(b))
\end{equation}
For $j \in \{1,\ldots, t\}$, let $\tau^G_j$ denote the tree returned
by the recursive call on $b^j$.

Let $S'$ be the sample for the Scenario SC instance induced by $b$,
so $S' = \{\nu(a) \mid a \in A \}$.
Let $w'$ be the weight function for that induced instance.
Let
$A_j = \{\nu(a) \mid a \in S, a \succeq b^j\}$.  Let $\mu^*_j$ denote an
optimal decision tree for the Scenario SC instance induced by
$b^j$.  
Consider the optimal decision tree $\tau^*$ for the
instance induced by $b$, and use it to form a decision tree $\tau^*_j$ for the
instance induced by $b^j$ as follows: for each item $i$ such that $b_{i}=*$ and $b^j_i \neq *$,
fix $i$ to have state $b^j_i$ in the tree. That is, for any node in the tree labeled $i$, delete all its children
except the one corresponding to state $b^j_i$, and then delete the node, connecting the parent of the node to its 
one remaining child.
Since
$\mu^*_j$ is optimal for the induced problem, $\tau^*_j$ cannot have
lower expected cost for this problem.  It follows that
$\sum_{a \in A_j}w'(a)\kappa(\tau^*_j,a) \geq \sum_{a \in
  A_j}w'(a)\kappa(\mu^*_j,a)$.
Further, since $\kappa(\tau^*,a) \geq \kappa(\tau^*_j,a)$ for any
$a \in A_j$,
\begin{equation}\label{eq:kappaineq}
  \sum_{a \in A_j}w(a)\kappa(\tau^*,a) \geq \sum_{a \in
    A_j}w'(a)\kappa(\mu^*_j,a).
\end{equation}

From the description of \texttt{MixedGreedy}, 
it is easy to verify that the $A_j$ are disjoint subsets of $S'$.
Therefore, 
\[
\sum_{a \in S'} w'(a)\kappa(\tau^*,a) = \sum_{j=1}^t \sum_{a \in A_j} w'(a)\kappa(\tau^*,a)
\]

Let $W = \sum_{a \in S'} w'(a)$.
For $a \in S'$, let $p(a)$ be the probability assigned to $a$ by distribution
$D_{S',w'}$, so
$p(a) = w'(a)/W$.
Let $c_Y$ be the sum of the costs incurred on the backbone of $\tau^G$ as in Lemma~\ref{lem:newlem}.
Taking expectations with respect to $D_{S',w'}$, we have
$\mathbb{E}[\kappa(\tau^G,a)] = c_Y + \sum_{j = 1}^t \sum_{a \succeq b^j} p(a)\kappa(\tau^G_j,a)$.
We can now bound the ratio between 
$G = \mathbb{E}[\kappa(\tau^G,a)]$ and
$C^* = \mathbb{E}[\kappa(\tau^*,a)]$.

\begin{align*}
\frac{G}{C^*} & = \frac{\sum_{a \in S'} w'(a)\kappa(\tau^G,a)}{\sum_{a \in S'} w'(a)\kappa(\tau^*,a)}\\
&= \frac{Wc_Y + \sum_{j = 1}^t \sum_{a \in A_j} w'(a)\kappa(\tau^G_j,a)}{\sum_{a \in S'} w'(a)\kappa(\tau^*,a)}\\
&= \frac{Wc_Y} {\sum_{a \in S'} w'(a)\kappa(\tau^*,a)} + \frac{\sum_{j = 1}^t \sum_{a \in A_j} w'(a)\kappa(\tau^G_j,a)}{\sum_{a \in S'} w'(a)\kappa(\tau^*,a)}\\
&\leq
24 + \frac{\sum_{j = 1}^t \sum_{a \in A_j} w'(a)\kappa(\tau^G_j,a)}{\sum_{a \in S'} w'(a)\kappa(\tau^*,a)} \tag*{by Lemma~\ref{lem:newlem}}\\
&= 24 + \frac{\sum_{j=1}^t\sum_{a \in A_j} w'(a)\kappa(\tau^G_j,a)}{\sum_{j=1}^t \sum_{a \in A_j} w'(a)\kappa(\tau^*,a)} \\
&\leq 24 + \max_j\frac{\sum_{a \in A_j} {w'(a)\kappa(\tau^G_j,a)}}{\sum_{a \in A_j} w'(a)\kappa(\mu^*_j,a)} \\
\end{align*}
In the last line, we substitute $\kappa(\tau^*,a)$ with
$\kappa(\mu^*_{j},a)$ because of \eqref{eq:kappaineq}, and we use the
max because of the fact that
$\frac{\sum x_{i}}{\sum y_{i}} \leq \max\limits_{i}
\frac{x_{i}}{y_{i}}$ for $x_{i}, y_{i}>0$.

As described above, for each $j$, the
recursive call to \texttt{MixedGreedy} on $b=b^j$ constructs a tree $\tau^G_j$
for a Scenario SC instance $I'$ induced by $b^j$, with goal value
$Q - g(b^j)$.  The tree $\mu^*_{j}$ is an optimal tree for instance
$I'$.  It follows that the ratio
$\frac{\sum_{a \in A_{j'}} {w(a)\kappa(\tau^G_j,a)}}{\sum_{a \in A_j}
  w(a)\kappa(\mu^*_j,a)}$
is equal to $\frac{G_j}{C^*_j}$, where $G_j$ and $C^*_j$ are the
values of $C^*$ and $G$ for the induced instance $I'$.  Thus we have
$\frac{G}{C^*} \leq 24 + \max_j \frac{G_j}{C^*_j}$.

We now prove that
$\frac{G}{C^*} \leq 1+ 24\frac{1}{\eta}\ln (Q-g(b))$, when $g(b) < Q$, by induction on
the total number of items $n=|N|$.
The base case $n=1$ clearly holds.
Assume inductively that
$\frac{G}{C^*} \leq 1+ 24\frac{1}{\eta}\ln (Q-g(b))$ when the number of
items is less than $n$, where $Q$ is the goal value.
Then for $n$ items, we have
$\frac{G}{C^*} \leq 24+(1 + 24\frac{1}{\eta}(\ln (Q-g(b^j))))$ for 
the $j$ maximizing $\frac{G_j}{C^*_j}$.
By~\eqref{eq:Qreduction}, $Q - g(b^j) \leq (1-\eta)(Q-g(b))$
so 
\begin{align*}
  \frac{G}{C^*}
  &\leq 24+\left(1+24\frac{1}{\eta}\ln ((1-\eta)(Q-g(b))\right)\\
  &\leq 1+24\left(1+\frac{1}{\eta}\ln ((1-\eta)(Q-g(b))\right)\\
  &= 1+24\left(1+\frac{1}{\eta}\ln (1-\eta)+\frac{1}{\eta} \ln (Q-g(b))\right)\\
  &\leq 1+24\frac{1}{\eta} \ln (Q-g(b))
\end{align*}
where the last inequality holds because $1-\eta \leq e^{-\eta}$ so
$\log(1-\eta) \leq -\eta$ and thus $\frac{1}{\eta}\ln(1-\eta) \leq -1$.

Since $Q \geq Q-g(b)$, the expected cost of the greedy tree $\tau^G$
constructed by the Mixed Greedy algorithm is within an
$O(\frac{1}{\eta} \ln Q)$ factor of the expected cost of the optimal
tree. Also, since $\eta = \min\{\rho, \frac{1}{9}\}$, we know that
$\frac{1}{\eta}$ is either constant or it is equal to
$\frac{1}{\rho}$. We therefore have that the expected cost of
$\tau^{G}$ is within an $O(\frac{1}{\rho}\log Q)$ factor of the
expected cost of the optimal tree.
\end{proof}

\subsection{Proof of Lemma~\ref{lem:newlem}}
\label{sec:newlemproof}
We now present our proof bounding the expected cost incurred on the
backbone of the greedy tree.  Our proof relies heavily on the work
of~\cite{streeterGolovin} on the Min-Sum Submodular Cover problem.
We use some of their terminology and definitions in our proof.

\subsubsection{Definitions}
We begin by defining a discrete version of the Min-Sum Submodular
Cover problem.  Let $N = \{1, \ldots, n\}$ be a set of items, and let
$c \in \mathbb{Z}^{n}_{\geq 0}$ be a non-negative integer vector of
``times'' associated with those items.  Let
$f:2^N \rightarrow \mathbb{Z}_{\geq 0}$ be a monotone, submodular
utility function and let $Q = f(N)$.  We define a \emph{schedule} to be
a finite sequence $S=\langle (i_1,\tau_1), \ldots, (i_m,{\tau_m}) \rangle$ of
pairs in $N \times \mathbb{R}_{\geq 0}$ and refer to $\tau_{j}$ as the
\emph{time to process item $i_j$}.

For a schedule $S$, we define $\ell(S) = \sum_{j \geq 1} \tau_j$ to be
the sum of the times spent on all items in $S$.  Given a schedule
$S = \langle (v_1, \tau_1), (v_2, \tau_2), \dots\rangle$, we define
$S_{\langle t\rangle}$ to be the schedule such that for 
$t \leq \ell(S)$,
\[
  S_{\langle t\rangle} = \langle (v_1, \tau_1), (v_2, \tau_2), \dots,
  (v_k, \tau_k), (v_{k+1}, t-\ts\sum_{i=1}^k \tau_i)\rangle
\]
where $k = \max\{j : \sum_{i=1}^j \tau_i < t\}$. For $t>\ell(S)$, we
let $S_{\langle t\rangle}=S$.
We refer to $S_{\langle t\rangle}$ as {\em $S$ truncated at time $t$}.

Let $f^c$ denote the function defined on schedules $S$ such that
$f^c(S) = \frac{1}{f(N)}f(\{i\mid (i,c_{i}) \in S\})$.
Thus, the only pairs $(i,\tau)$ in the schedule that contribute to the value of
$f^c$ are  those for which $\tau = c_i$.  Where $c$ is
understood, we will omit the superscript and use $f$ to denote
both the original utility function on $2^N$, and the function $f^c$
which is defined on schedules.


We define the {\em cost of schedule $S$}, with respect to $f$ and $c$, to be
\begin{equation}\label{eq:ubschedcost}
  \cost(f^{c}, S) = \int_{t=0}^{\ell(S)} 1 - f^{c}(S_{\langle t\rangle} )dt
\end{equation}

We define the {\em Discrete Min-Sum Submodular Cover} Problem on $f$ 
and $c$ to be the problem of finding a
schedule $S$ that achieves $f^c(S)=1$ with minimum cost.

Streeter and~Golovin presented a greedy algorithm for the general
Min-Sum Submodular Cover problem.
In Discrete Min-Sum Submodular Cover, a pair $(i,\tau)$
can only contribute to the utility of a schedule if $\tau = c_i$.
The general problem studied by Streeter and Golovin does not
have this restriction.

%

\subsubsection{Standard Greedy Algorithm for Discrete Min-Sum Submodular Cover}

The algorithm of
Streeter and Golovin for the general Min-Sum Submodular Cover problem uses
a standard greedy approach.
It adds pairs $(i,\tau)$ iteratively to the end
of an initially empty schedule, using the greedy rule of choosing the
pair that will result in the largest increase in utility 
per unit time.  
We call this algorithm {\em Standard Greedy}.

We restrict our attention to the Discrete Min-Sum Submodular Cover problem.
Applied to this problem, Standard Greedy
uses the greedy rule of choosing the pair
$(i,c_i)$ that will result in the largest increase in utility as measured
by $f^c$, per unit time.  The algorithm ends when the constructed schedule $S$
satisfies $f^{c}(S) = 1$.

More formally, Standard Greedy uses the greedy rule below to
construct a greedy schedule $G = \langle (g_1,\tau_1), (g_2,\tau_2), \dots\rangle$,
where each $g_j = i$ for some $i \in N$, and $\tau_i = c_i$.
Since each $\tau_i$ is determined by $g_i$, we drop
the $\tau_i$ from the description of the schedule, and consider
$G$ to be simply a list of actions $g = \langle g_1, g_2, \ldots,\rangle$. 

We
define $G_j = \langle g_1, g_2, \dots g_{j-1}\rangle$, where
$G_1 = \langle\,\rangle$. The action $g_j$ chosen
using the greedy rule is as follows (using $\oplus$ to represent the
concatenation of two schedules):
\begin{equation}\label{eq:greedy} g_j =
  \argmax\limits_{(i,c_i)\mid i \in N}\left\{\frac{f(G_j\oplus
      \langle (i,c_i)\rangle ) - f(G_j)}{c_i}\right\} \end{equation}


The following theorem of Streeter and Golovin shows that the schedule constructed by Standard Greedy
has a cost that is within a factor of 4 of the cost achieved by any schedule (including
the optimal schedule).

\begin{theorem}[\cite{streeterGolovin}]
\label{thm:streeter}
Let $I$ be an instance of the Discrete Min-Sum Submodular Cover problem with
time vector $c$, monotone submodular utility function $f$, and item set $N$.
Let $\mathcal{S}$ denote the set of all schedules $S$ for item set $N$ and
cost vector $c$ that satisfy $f^{c}(S) = 1$.
Let $G$ be the schedule constructed by running Standard Greedy algorithm on instance $I$.
Then for all $S \in \mathcal{S}$, $\cost(f^c,G) \leq 4\cost(f^c,S)$.
\end{theorem}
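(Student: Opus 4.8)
The plan is to prove the $4$-approximation by the amortized-\emph{price}/histogram argument of Feige, Lov\'asz and Tetali for min-sum set cover, in the submodular generalization due to Streeter and Golovin; it suffices to establish $\cost(f^c,G)\le 4\,\cost(f^c,S)$ for an \emph{arbitrary} $S\in\mathcal S$, in particular the optimum. Since $f^c(G_{\langle t\rangle})=f(G_j)/Q$ for all $t$ strictly between the completion times of $g_{j-1}$ and $g_j$ (where $G_j=\langle g_1,\dots,g_{j-1}\rangle$ is the greedy prefix), one immediately gets
\[
  \cost(f^c,G)=\tfrac1Q\sum_{j\ge 1}\bigl(Q-f(G_j)\bigr)\,c_{g_j},
\]
and the same identity for $S$. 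Because $f$ is integer valued with $f(N)=Q$, I would think of $f$ as being raised through the $Q$ \emph{residual levels} $0,1,\dots,Q-1$, where level $\ell$ is closed by greedy at the unique step $j(\ell)$ with $f(G_{j(\ell)})\le\ell<f(G_{j(\ell)+1})$. Spreading the step-$j$ contribution $(Q-f(G_j))c_{g_j}$ evenly over the $f(G_{j+1})-f(G_j)$ levels closed at that step assigns to level $\ell$ the \textbf{price} $\pi(\ell):=(Q-f(G_{j(\ell)}))/\rho_{j(\ell)}$, where $\rho_j$ is greedy's maximum marginal gain per unit time at step $j$ (so $f(G_{j+1})-f(G_j)=\rho_j c_{g_j}$); a short calculation gives $\sum_\ell\pi(\ell)=Q\cost(f^c,G)$. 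Symmetrically, let $\sigma(\ell)$ be the \textbf{closing time} of level $\ell$ for $S$, i.e.\ the time at which $S$ first reaches utility $\ell+1$, so that $\sum_\ell\sigma(\ell)=Q\cost(f^c,S)$. The whole theorem then reduces to a single inequality: for every threshold $h>0$,
\[
  |\{\ell:\pi(\ell)\ge 2h\}|\ \le\ 2\,|\{\ell:\sigma(\ell)>h\}|.
\]

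The only place submodularity is used is a greedy rate lemma: for any greedy prefix $G_j$ and any $P\subseteq N$ with $\sum_{i\in P}c_i\le h$, monotonicity followed by subadditivity of the marginals of a submodular $f$ gives
\[
  f(P)-f(G_j)\ \le\ f(G_j\cup P)-f(G_j)\ \le\ \sum_{i\in P}\bigl(f(G_j\cup\{i\})-f(G_j)\bigr)\ \le\ \rho_j\sum_{i\in P}c_i\ \le\ \rho_j h.
\]
To prove the key inequality, fix $h$, let $P$ be the set of items that $S$ completes within time $h$, and put $n_h=|\{\ell:\sigma(\ell)>h\}|=Q-f(P)$. I would bound the levels $\ell$ with $\pi(\ell)\ge 2h$ by splitting into two buckets. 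Levels $\ell\ge f(P)$: there are at most $Q-f(P)=n_h$ of them outright. Levels $\ell<f(P)$ with $\pi(\ell)\ge 2h$: writing $j=j(\ell)$ we have $f(G_j)\le\ell<f(P)$, and $\pi(\ell)\ge 2h$ means $\rho_j\le(Q-f(G_j))/(2h)$; feeding this into the rate lemma yields $f(P)-f(G_j)\le(Q-f(G_j))/2$, which rearranges to $f(G_j)\ge Q-2n_h$, hence $\ell\ge Q-2n_h$. So this bucket only contains levels in $[Q-2n_h,\,f(P))$, again at most $n_h$ of them. Summing the two buckets gives $|\{\ell:\pi(\ell)\ge 2h\}|\le 2n_h$, as required.

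Finally, integrating the key inequality over $h\in(0,\infty)$ converts it into the factor $4$: the left side integrates to $\tfrac12\sum_\ell\pi(\ell)=\tfrac12 Q\cost(f^c,G)$ and the right side to $2\sum_\ell\sigma(\ell)=2Q\cost(f^c,S)$, so $\cost(f^c,G)\le 4\cost(f^c,S)$, and letting $S$ range over $\mathcal S$ finishes the proof. I expect the main obstacle to be the bookkeeping rather than the (short) rate lemma: one must realize that the naive ``time-to-reach-level-$x$'' histogram does \emph{not} embed into $S$'s, since greedy can take $\Omega(\log Q)$ times as long as $S$ to reach full utility just as in ordinary Set Cover, so the price histogram is essential; one must handle the non-integer schedule truncation points and the partial, non-contributing pair $(v_{k+1},\,t-\sum_{i=1}^{k}\tau_i)$ of $S_{\langle t\rangle}$ carefully; and one should dispatch edge cases such as $f(\emptyset)>0$ (those levels are free, with price and closing time $0$) and items with $c_i=0$ (taken immediately at no cost). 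Streeter and Golovin's device of lifting $f$ to the schedule function $f^c$ is precisely what smooths this bookkeeping out, so in practice I would transcribe their argument to the present Discrete Min-Sum instance, which is legitimate because that instance is simply the restriction of their general Min-Sum Submodular Cover problem to schedules in which every pair $(i,\tau)$ satisfies $\tau=c_i$.
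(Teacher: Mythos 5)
Your proof is correct and is essentially the argument this paper relies on: the theorem itself is only cited to Streeter and Golovin, but the paper's proof of its truncated variant (Lemma~\ref{lem:sgb}) is the same histogram comparison --- your rate lemma is exactly the paper's inequality \eqref{eq:sfact1}, $f(S_{\langle t\rangle})\le f(G_j)+t\,s_j$, and your level-price $\pi(\ell)$ plus threshold-integration bookkeeping is the discrete Feige--Lov\'asz--Tetali repackaging of the paper's continuous picture of bars $(x_j,y_j)$ fitting under $\psi(x)=1-f(S_{\langle x\rangle})$, with the two factors of $2$ arising in the same two places. The only point worth adding explicitly is that while $f(G_j)<Q$ some remaining item has strictly positive marginal (by monotonicity and submodularity, since $f(N)=Q$), so every greedy step closes at least one integer level and $\sum_\ell\pi(\ell)$ genuinely equals $Q\cost(f^c,G)$.
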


\subsubsection{Bound on Cost of MixedGreedy} 
We now return to our analysis of
\texttt{MixedGreedy}($g,Q,S,w,c,b$). 
As part of our analysis, we will prove a result similar to 
\theoremref{thm:streeter}. 

Without loss of generality, assume
that $b = *^n$.

Recall that $c_Y = \sum_{v \in Y}\tilde{p}(v)c_v$, where $Y$ is the
set of nodes in the backbone, $\tilde{p}(v)$ is the probability that a
random realization will reach node $v$, and $c_v$ is the cost of the
item labeling node $v$.  Let
$S^Y = \langle (i_1,c_{i_1}), \ldots, (i_{k-1},c_{i_{k-1}}) \rangle$
be the schedule such that $i_1, \ldots, i_{k-1}$ is the sequence of
items labeling the nodes in the backbone, from the top of the backbone
and moving downwards.

Define a utility function $h_p:2^N \rightarrow \mathbb{R}_{\geq 0}$
such that for $R \in 2^N$,
$h_p(R) = 1 - \sum_{a \succeq \sigma^R} p(a)$, where $\sigma^R$ is the
realization in $\Gamma^n$ such that $\sigma^R_i = \sigma_i$ for
$i \in R$, and $\sigma^R_i = *$ otherwise. The function $h_{p}$ is
clearly monotone and submodular. Additionally, we can see that
$\sum_{v \in Y}\tilde{p}(v)c_v$ is the cost of schedule $S^Y$ with
respect to utility function utility function $h_p$.

Recall that $\tau^*$ denotes the optimal strategy solving the Scenario
Submodular Cover instance on $g$ and $c$.  Consider the sequence
$j_1, \ldots, j_t$ of items chosen by $\tau^*$ on realization
$\sigma$.  Let
$S^* = \langle (j_1, c_{j_1}), \ldots, (j_t, c_{j_t}) \rangle$. 
The schedule $S^{Y}$ created by \texttt{MixedGreedy} 
is constructed greedily, using the same type of greedy rule as in
\eqref{eq:greedy}.  
However, $S^{Y}$ is constructed
in two stages: 
the first stage greedily chooses
with respect to $h_{p}$, and the second chooses greedily with
respect to an entirely different utility function. We therefore cannot
directly apply \theoremref{thm:streeter} to bound the cost of schedule
$S^{Y}$. We deal with this by using an approach analogous to one used
by \cite{cicaleseLaberSaettler}~(in the analysis of their Equivalence
Class Determination algorithm) that allows us to concentrate only on
the cost of the portion of the schedule constructed during the first
stage.

To do this, we note that schedule $S^Y$ can be expressed as the concatenation of two schedules,
$S^1$ and $S^2$, where $S^1$ contains the $i_j$ chosen during the first
repeat loop, with their costs, and $S^2$ contains the $i_j$ chosen during
the second, also with their costs.
Recall that $\sum_{v \in Y}\tilde{p}(v)c_v$ is the cost of schedule $S^Y$
with respect to $h_p$. 
We can express this cost as follows:

\[ \sum_{v \in Y}\tilde{p}(v)c_v = \int_{t=1}^{\ell(S^1)} 1 - h_p(S^1_{\langle t\rangle} )dt + \int_{t=0}^{\ell(S^2)} 1 - h_p(S^1 \oplus S^2_{\langle t\rangle})dt \]

Note that $\ell(S^2) \leq 2B$, since we have assumed that each
$c_i \leq B$, and the second repeat loop of \texttt{MixedGreedy} ends as soon
as the last item added causes the length of $S^2$ to exceed $B$.
Since $h_p$ is monotone, the value of the second integral is at most
$2B(1-h_p(S^1))$, and the value of the first integral is at least
$B(1-h_p(S^1))$ because $\ell(S^{1}) \geq B$.  It follows that the
value of the second integral is at most twice the value of the first,
so we have
\[ \sum_{v \in Y}\tilde{p}(v)c_v \leq 3\int_{t=0}^{\ell(S^1)} 1 -
  h_p(S^1_{\langle t\rangle} )dt \]
which yields the following inequality, allowing us to bound the total
cost of $S^{Y}$ by analyzing the cost of $S^{1}$.

\begin{equation}
\label{eq:twotoone}
\cost(h_p,S^Y) \leq 3\cost(h_p,S^1)
\end{equation}

Therefore, to prove Lemma~\ref{lem:newlem}, it suffices to bound 
$\int_{t=0}^{\ell(S^1)} 1 - h_p(S^1_{\langle t\rangle} )dt$, which is the cost of schedule $S_1$
with respect to $h_p$.

Schedule $S^{1}$
selects items greedily with respect to $h_{p}$. 
However, we cannot apply
Theorem~\ref{thm:streeter} to bound the cost of $S_1$ in terms of the
cost of $S^*$, because only items of cost at most $B$ are considered
in greedily forming $S^1$, while items of cost greater than $B$ may be
included in $S^*$.

We will instead bound the cost of $S^1$ in terms of the cost of
the truncated schedule $S^*_{\langle B \rangle}$.
To do this, we will prove a lemma that is similar to Theorem~\ref{thm:streeter}.
We defer its proof to the next section, since
it is somewhat technical and is similar to
the proof of \theoremref{thm:streeter}.
The definitions of $G_j$ and $d$ are as given in
the previous section.

The statement of the lemma is as follows.

\begin{lemma}
\label{lem:sgb}
Let $I$ be an instance of the Discrete Min-Sum Submodular Cover problem with
time vector $c$, utility function $f$, and item set $N$.
Let $\mathcal{S}$ denote the set of all schedules $S$ for item set $N$ and
cost vector $c$ satisfying $f^{c}(S) = f(N)$.
Let $G = \langle g_1, g_2, \dots \rangle$
be the schedule constructed by running Standard Greedy on instance $I$
and let
$G_j = \langle g_1, g_2, \dots g_{j-1}\rangle$, where
$G_1 = \langle\,\rangle$. 
Let $B \in \mathbb{R}$ be such that $\ell(G) \geq B$
and let $d$ be the maximum $j$ such that
$\ell(G_j) < B$.
For any
schedule $S \in \mathcal{S}$,
$\cost(f,G_d) \leq 4\cost(f,S_{\langle B \rangle})$.  Further,
$\cost(f,G_{d+1}) \leq 8\cost(f,S_{\langle B \rangle})$.
\end{lemma}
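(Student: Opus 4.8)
The plan is to split the argument into a routine bookkeeping reduction that replaces both inequalities by a single ``truncated at $B$'' analogue of Theorem~\ref{thm:streeter}, followed by a proof of that analogue obtained by replaying the Streeter--Golovin charging argument while only accounting for what happens up to time $B$.

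First I would set up notation and carry out the reduction. Write $G_k=\langle g_1,\dots,g_{k-1}\rangle$, normalise so that $f(N)=1$ (consistent with the definition of $f^c$), and let $r_j=1-f(G_j)$ be greedy's residual just before it processes $g_j$. Since in the discrete setting a partially processed last item contributes nothing to $f^c$, one gets the closed forms $\cost(f,G_k)=\sum_{j=1}^{k-1}c_{g_j}r_j$ and $\cost(f,G_{\langle B\rangle})=\cost(f,G_d)+(B-\ell(G_d))\,r_d$. Because the integrand in the definition of $\cost$ is nonnegative, $\cost(f,G_d)\le\cost(f,G_{\langle B\rangle})$, which already reduces the first claim to the truncated analogue. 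For the second claim, $\cost(f,G_{d+1})=\cost(f,G_{\langle B\rangle})+(\ell(G_{d+1})-B)\,r_d$; using $c_{g_d}\le B$ (no item processed by greedy has cost exceeding $B$ in our setting, since \texttt{MixedGreedy} only ever considers items in $\{i:c_i\le B\}$) we get $\ell(G_{d+1})-B=\ell(G_d)+c_{g_d}-B\le\ell(G_d)$, while $\ell(G_d)\,r_d=r_d\sum_{j<d}c_{g_j}\le\sum_{j<d}c_{g_j}r_j=\cost(f,G_d)\le\cost(f,G_{\langle B\rangle})$; hence $\cost(f,G_{d+1})\le 2\cost(f,G_{\langle B\rangle})$. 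Thus it suffices to prove
\[
\cost(f,G_{\langle B\rangle})\ \le\ 4\,\cost(f,S_{\langle B\rangle})\qquad\text{for every }S\in\mathcal{S}.
\]

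Next I would prove this truncated inequality by adapting the proof of Theorem~\ref{thm:streeter}. Its engine is the standard submodular greedy estimate: at its $j$-th batch greedy picks the item of largest marginal utility per unit time, so by submodularity any collection of items that raises the utility of $G_j$ by $\delta$ has total cost at least $\delta/\rho_j$, where $\rho_j=(r_j-r_{j+1})/c_{g_j}$ is greedy's current rate; in particular, extending $G_j$ to utility $f(N)$ costs at least $r_j/\rho_j$, and since $S\in\mathcal{S}$ is complete, its set of fully processed items alone witnesses such an extension from any starting point. Streeter and Golovin convert this into $\cost(f,G)\le 4\cost(f,S)$ by a histogram/charging argument that loses a factor $2$ on the ``time'' axis and a factor $2$ on the ``utility'' axis. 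I would run exactly this argument but charge only the part of greedy's cost histogram accumulated before time $B$ (that is, $\cost(f,G_{\langle B\rangle})$) against $S$'s cost histogram restricted to $[0,B]$; truncating $S$ earlier only decreases its cost, so the displayed bound follows (plausibly even with $S_{\langle B/2\rangle}$ in place of $S_{\langle B\rangle}$, reflecting the factor $2$ on the time axis). The one point needing care is that $S_{\langle B\rangle}$ need not be complete and its last item is only partially processed; but a partial item adds no utility, so it contributes to $\cost(f,S_{\langle B\rangle})$ exactly the ``full residual'' quantity the right-hand side of the charging argument requires, and the argument goes through unchanged.

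The reduction is routine; the main obstacle is the truncated analogue itself. Concretely, I expect the delicate step to be checking that the per-unit-of-utility ``price'' comparison that yields the constant $4$ in Theorem~\ref{thm:streeter} still goes through when both schedules are cut off at time $B$ and $S_{\langle B\rangle}$ need not reach utility $f(N)$ — which is why, although the proof is structurally parallel to that of Theorem~\ref{thm:streeter}, it is ``somewhat technical''. The extra factor of $2$ (the jump from $4$ to $8$) in the second claim is exactly the price of the single item, of cost at most $B$, by which greedy may overshoot time $B$.
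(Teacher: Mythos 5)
Your reduction of the second inequality to the first is correct and is a legitimate alternative to the paper's route: the paper instead introduces the residual job $f_{G_d}(\cdot)=f(G_d\oplus\cdot)$ and uses the additivity $\cost(f,G_d)+\cost(f_{G_d},S)=\cost(f,G_d\oplus S)$, whereas your direct computation with the residuals $r_j$ is arguably cleaner. You are also right -- and more explicit than the paper -- that this step needs $c_{g_d}\le B$, which holds in the application because \texttt{MixedGreedy} only ever considers items of cost at most $B$, but is not a stated hypothesis of the lemma.

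The gap is in the second half. The truncated inequality $\cost(f,G_{\langle B\rangle})\le 4\cost(f,S_{\langle B\rangle})$ (essentially Property~1) is the entire content of the lemma, and ``replay the Streeter--Golovin charging argument restricted to $[0,B]$; the argument goes through unchanged'' is not a proof, because this is exactly where it does not go through unchanged. In the histogram argument, greedy's cost is written as $4\sum_j x_j(y_j-y_{j+1})$ with $x_j=R_j/(2s_j)$ and $y_j=R_j/2$, and each bar is placed under the curve $\psi(x)=1-f(S_{\langle x\rangle})$ via $\psi(x_j)\ge y_j$. In the untruncated setting one integrates $\psi$ over all of $[0,\ell(S)]$, so every bar fits; once the integral is cut at $x=B$, bars with $x_j>B$ stick out of the integration region, and nothing forces $x_j=R_j/(2s_j)\le B$ even though greedy's elapsed time stays below $B$. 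The paper's proof spends most of its effort precisely here: it shows $x_j\le B$ only for $j<k$, where $k=\min\{j:y_j\le\psi(B)\}$ (Fact~\ref{fact:xjB}), and it bounds the total area of the remaining bars by $\sum_{j\ge k}\tfrac14\tau_jR_j\le\tfrac14R_k\,\ell(G_d)<\tfrac12R_kB=\int_0^By_k\,dx$, i.e., by the rectangle below the line $y=y_k$, crucially using $\ell(G_d)<B$. Your parenthetical ``truncating $S$ earlier only decreases its cost'' points the wrong way: you need a \emph{lower} bound on $\cost(f,S_{\langle B\rangle})$, so the fact that it is smaller than $\cost(f,S)$ cannot help. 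You correctly sense that the ``delicate step'' is the truncated comparison, but you neither identify the actual obstruction (bars of width exceeding $B$) nor supply the split-at-$y_k$ argument that resolves it, so neither property is established.
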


We now show how to use \lemmaref{lem:sgb} to prove Lemma~\ref{lem:newlem}.  

  Let $m$ be such that
  $S^*_{\langle B \rangle} = \langle (j_1, c_{j_1}), \ldots,
  (j_{m-1},c_{j_{m-1}}), (j_m, \tau_{j_m}) \rangle$.  By the definition of schedule
  truncation, $\tau_{j_m} \leq c_{j_{m}}$.  Since the length of
  $S^{*}_{\langle B \rangle}$ is $B$, each of
  $c_{j_1}, \ldots, c_{j_{m-1}}$ is at most $B$, but it is possible
  that $c_{j_m} > B$.

  Consider a restricted version $I'$ of our current Min-Sum Submodular
  Cover instance $I$ in which we include only those items $i \in N$
  such that $c_i \leq B$.  Let $N'$ be the set of those items.
  Let $S'$ be the schedule that results from concatenating
  $\langle (j_1, c_{j_1}), \ldots, (j_{m-1},c_{j_{m-1}}) \rangle$ with
  an arbitrary sequence of pairs $(i,c_i)$ with $i \in N'$, such that
  $h_{p}(S')=h_{p}(N')$.  Let $\ell'$ denote
  $\ell(\langle (j_1, c_{j_1}), \ldots, (j_{m-1},c_{j_{m-1}}
  \rangle)$.
  Comparing $S'_{\langle B \rangle}$ to $S^{*}_{\langle B \rangle}$,
  both have the same first $m-1$ elements.  Schedule
  $S^{*}_{\langle B \rangle}$ then has $(j_m, \tau_{j_{m}})$ where
  $\tau_{j_{m}} = B - \ell'$, whereas schedule
  $S'_{\langle B \rangle}$ may then have multiple elements in
  $N' \times \mathbb{Z}_{\geq 0}$ which together have length
  $B - \ell'$.  Because $h_p$ is monotone, and the cost of $h_p$ on
  schedule $S^{*}_{\langle B \rangle}$ is
  $\cost(h_{p}, S^{*}_{\langle B\rangle}) = \int_{t=0}^{B} 1 -
  h_p(S^{*}_{\langle t\rangle} )dt$,
  and analogously for $S'_{\langle B \rangle}$, it immediately follows
  that
\begin{equation}
\label{eq:primo}
\cost(h_p, S'_{\langle B \rangle}) \leq \cost(h_p,S^*_{\langle B \rangle})
\end{equation}

Now consider the schedule $S^1$ that is computed during the the first stage of
running \texttt{MixedGreedy}. 
Let $G'$ be the greedy schedule produced by
running the Greedy algorithm on instance $I'$,
with utility function $h_p$ and times $c$. Because only items $i$
with $c_i \leq B$ are considered when $S^1$ is constructed, and items
are chosen greedily with respect to $h_p$, $S^1$ is a prefix of $G'$.

Let $d$ be such that
$S^1 = \langle (i_1, c_{i_1}), \ldots, (i_d, c_{i_d}) \rangle$. Thus,
$S^{1} = G'_{d+1}$. in particular, we have that $\ell(S^{1}) \geq B$
and
$\ell(\langle (i_{1},c_{i_{1}}),\dots,(i_{d-1},c_{i_{d-1}})\rangle) <
B$.
It follows from~\eqref{eq:primo} and from  Lemma~\ref{lem:sgb} that
\begin{equation}
\label{eq:onetoprime}
\cost(h_p,S^1) \leq 8\cost(h_{p},S'_{\langle B\rangle}) \leq 8\cost(h_p,S^*_{\langle B \rangle})
\end{equation}
and therefore 
\begin{equation}
\cost(h_p,S^1) \leq 8\cost(h_p,S^*)
\end{equation}

We have that $\cost(h_p,S^Y) \leq 3\cost(h_p,S^1)$. We also have that
$c_Y = \cost(h_p,S^Y)$ and $C^* = \cost(h_p,S^*)$. Therefore, we have
\[
c_{Y} = \cost(h_{p}, S^{Y}) \leq 3\cost(h_{p}, S^{1}) \leq 24\cost(h_{p}, S^{*}) = 24C^{*}
\]
\jmlrQED

\subsection{Proof of Lemma~\ref{lem:sgb}, approximation bounds for truncated schedules}

We prove Lemma~\ref{lem:sgb}, which states that the following two properties hold:

\begin{enumerate}[\bfseries{Property }1{:}]
\item $\cost(f,G_d) \leq 4\cost(f,S_{\langle B \rangle})$
\item $\cost(f,G_{d+1}) \leq 8\cost(f,S_{\langle B \rangle})$
\end{enumerate}

The proof is similar to the proof of Streeter and Golovin for Theorem~\ref{thm:streeter}.
\footnote{Although we give a proof only for Discrete Min-Sum Submodular Cover, the proof can easily be adapted to give the same result for the more general Min-Sum Submodular Cover problem considered by Streeter and Golovin.}
We will assume that $f:2^N \rightarrow [0,1]$.
We can transform any $f:2^N \rightarrow \mathbb{R}_{\geq 0}$ into a function of this
type by scaling $f$ so that 
for all $S \in 2^N$, the scaled version of $f(S)$ is equal to $\frac{f(S) - f(\emptyset)}{f(N) - f(\emptyset)}$.

Recall that $f^c$ is the function defined on schedules $S$ such that
$f^c(S) = \frac{1}{f(N)}f(\{i\mid (i,c_{i}) \in S\})$.
We call $f^c$ a
\emph{job}.  
We refer to a pair
$(i,\tau) \in N \times \mathbb{R}_{\geq 0}$ as an \emph{action} and
to $\tau$ as the \emph{time} taken by that action.

As in Section~\ref{sec:newlemproof}, let 
$G = \langle (g_1,\tau_1), (g_2,\tau_2), \dots\rangle$,
denote the schedule
computed by the Greedy algorithm on $I$ and
let $G_j = \langle g_1, g_2, \dots g_{j-1}\rangle$.
et $S$ be an arbitrary
schedule for the instance with $f(S) = f(N)$.  
Let $d$ be the maximum $j$
such that $\ell(G_j) < B$.

We may assume without
loss of generality that for every $(i,\tau)$ in $S$,
$\tau = c_i$, 
since 
$f^c$ does not gain any value from pairs
$(i,\tau)$ with $\tau \neq c_i$.  
As before, we will generally omit the superscript on $f^c$ and simply write
$f(S)$. 

We begin by showing that Property 1 implies Property 2.

\noindent\textbf{Property 1 $\Rightarrow$ Property 2:}
  We define $f_{G_d}(S)$, a new function defined on schedules that is derived from $f$. 
  Intuitively, $G_d$ completes some portion of the job $f$; we wish to consider the portion
  of the job that remains to be completed after the actions in $G_{d}$
  have been performed. The function $f_{G_d}(S)$ is defined to be the
  portion of the job completed by first executing schedule $G_{d}$ and
  then executing schedule $S$. We express this as
  $f_{G_{d}}(S) = f(G_{d} \oplus S)$. Note that $f_{G_{d}}$ still
  satisfies the essential conditions for a job as it is monotone and
  submodular. It should be noted, however, that unless $f(G_{d})=0$,
  then $f_{G_{d}}(\langle\rangle) \neq 0$ (equivalently, due to
  monotonicity, there is no schedule $S$ for which $f_{G_{d}}(S)=0$).

  It is easy to show that the $\cost(f_{G_{d}}, S)$ represents the
  additional cost incurred by schedule $S$ on job $f$ after the
  schedule $G_{d}$ has already been executed.
  \begin{align*}
    \cost(f_{G_d}, S)
    &= \int_{t=0}^{\ell(S)} \left(1 - f_{G_d}(S_{\langle t \rangle})\right)dt \\
    &= \int_{t=0}^{\ell(S)} \left(1 - f(G_{d}\oplus S_{\langle t \rangle})\right)dt \\
    &= \int_{t=\ell(G_{d})}^{\ell(G_{d}\oplus S)} \left(1-f\left(\left(G_{d}\oplus S\right)_{\langle t \rangle}\right)\right)dt \\
    &= \int_{t=0}^{\ell(G_{d}\oplus S)} \left(1 - f\left(\left(G_{d}\oplus S\right)_{\langle t \rangle}\right)\right)dt - \int_{t=0}^{\ell(G_{d})} \left(1 - f(G_{d\,\langle t \rangle})\right)dt
  \end{align*}
  Therefore, we have
\begin{equation}\label{eq:costfgd}
  \cost(f, G_{d}) + \cost(f_{G_{d}}, S) = \cost(f, G_{d}\oplus S)
\end{equation}

Property~1 asserts that
$\cost(f, G_d) \leq 4\cost(f, S_{\langle B\rangle})$ for any schedule
$S \in \mathcal{S}$.  
STOPPED HERE
Also from this
assumption, the greedy schedule for $f_{G_{d}}$ is within a factor of
$4$ of any other schedule for $f_{G_{d}}$. Additionally, if we look at
only the first action of the greedy schedule for $f_{G_{d}}$
(i.e.~action $g_{d}$), the cost incurred by this one action is less than that
of the entire greedy schedule for $f_{G_{d}}$, which in turn is less
than $4$ times any other schedule for $f_{G_{d}}$. Thus, we also have
that
$\cost(f_{G_d}, \langle g_d\rangle) \leq 4\cost(f_{G_d},
S^{*}_{\langle B \rangle})$. Therefore, we have
\begin{align*}
 \cost(f, G_{d+1}) & = \cost(f, G_{d}) + \cost(f_{G_d}, \langle g_d\rangle)\tag{by \eqref{eq:costfgd}} \\
& \leq 4\cost(f, S^{*}_{\langle B \rangle}) + 4\cost(f_{G_d}, S^{*}_{\langle
      B\rangle})\\
&  \leq 8\cost(f, S^{*}_{\langle B \rangle})
\end{align*}
since, by the monotonicity of $f$, $\cost(f_{G_{d}}, S^{*}_{\langle B\rangle}) \leq \cost(f, S^{*}_{\langle B\rangle})$.
\bigskip

\noindent\textbf{Proof of Property 1:}
  We first define a few values. The quantity $R_j = 1 - f(G_j)$ represents how
  much of our task remains to be completed before the $j$th item of the greedy schedule is chosen.
  We define $s_j$ to be the ``bang for the buck''
  earned from that item. 
  That is, $s_j = \frac{(R_j - R_{j+1})}{\tau_j}$.
  Then, let
  $p_j = \frac{R_j}{s_j}$ for all $j \leq d$, and $p_j = 0$ for
  $j > d$. Let $x_j = \frac{p_j}{2}$ and let
  $y_j = \frac{R_j}{2}$. Also, let
  $\psi(x) = 1 - f(S^{*}_{\langle x\rangle})$.

  In order to prove the theorem, we wish to show
  \[ \int_{t=0}^B \left(1-f(S^{*}_{\langle t\rangle})\right)dt =
    \int_{x=0}^{B}\psi(x)\,dx \geq \frac{1}{4}\cost(f, G_{d}) \]

  We need to integrate $\psi(x)$ only up to $x=B$. When $x=B$,
  $\psi(x)=\psi(B)$ is the amount of the task that remains to be
  completed at time $B$ under schedule the optimal schedule
  $S^{*}$. We associate with this amount a $y_k$, corresponding to the
  greedy schedule, where $k=\min\{\,j : y_j \leq \psi(B)\,\}$. This
  can be seen in Figure~\ref{fig:graph}, where $x=B$ and $y=\psi(B)$
  are shown as dotted lines, with $y_{k}$ being the first $y_{j}$
  appearing below the dotted line $y=\psi(B)$.
  
  We first present an important fact. For any schedule $S$, any
  positive integer $j \leq d$, and any $t >= 0$,
  \begin{equation}\label{eq:sfact1} f(S_{\langle t\rangle})\leq
    f(G_j)+t\cdot s_j \end{equation}
  This is a consequence of the monotonicity and submodularity of $f$, together with the fact that the greedy algorithm 
  always chooses the item with the best ``bang for the buck''.
  It is shown in~\cite{streeterGolovin} as Fact~1.
   
  Using this fact, we have
  \[ f(S^{*}_{\langle x_j\rangle}) \leq f(G_j) + x_{j}s_j = f(G_j) +
    \frac{R_j}{2} \] So, for $j \leq d$ we have
  \begin{equation*}
    \psi(x_{j}) = 1 - f(S^{*}_{\langle x_{j} \rangle}) \geq 1 - f(G_{j}) - \frac{R_{j}}{2} = R_j - \frac{R_j}{2}
  \end{equation*}
and therefore
  
  \begin{equation}\label{eq:hxjyj}
    \psi(x_{j}) \geq y_{j}
  \end{equation}
  Note that \eqref{eq:hxjyj} holds for $j>d$ as well, since $x_{j}=0$
  by definition; thus, $\psi(x_{j})=\psi(0)=1 \geq y_{j}$.

  \begin{figure}[hp!]
    \floatconts{fig:graph}{\caption{Above $y_{k}$, the gray bars fit
        entirely inside the area of integration of $\psi-y_{k}$ (the area
        below the $\psi(x)$ graph and above $y_{k}$). Below $y_{k}$, the total area
        of the gray bars is still less than the remaining area of
        integration for $\psi$ (that is, the rectangle bounded above by
        $y_{k}$ and on the right by the dotted line $x=B$)}}{
      \subfigure[The case where $d < k$. The area of the gray bars
      below $y_{k}$ is $0$.]{
        \begin{tikzpicture}
        \draw[color=gray] (0,0) -- (0,6) -- (9,6) -- (9,0) -- (0,0);
        \node at (1.25,4.75) {\smaller$\psi(x)$};
        \node at (4.5,-0.5) {\smaller$x$};
        \foreach \x in {0,...,6}
        {
          \draw[color=gray] (-0.05,\x) -- (0,\x);
          \draw[color=gray] (\x,-0.05) -- (\x,0);
        }
        \node at (-0.25,5) {\scriptsize$1$};
        \node at (-0.25,2.6) {\tiny$y_1$};
        \draw[fill,color=lightgray] (0,2.6) -- (1,2.6) -- (1,2.3) -- (0,2.3);
        \draw[fill,color=lightgray] (0,2.3) -- (2,2.3) -- (2,1.9) -- (0,1.9);
        \draw[fill,color=lightgray] (0,1.9) -- (1.8,1.9) -- (1.8,1.6) -- (0,1.6);
        \draw[thick] (0,2.6) -- (1,2.6);
        \node at (-0.25,2.3) {\tiny$y_2$};
        \draw[thick] (0,2.3) -- (2,2.3);
        \node at (-0.25,1.9) {\tiny$y_d$};
        \draw[thick] (0,1.9) -- (1.8,1.9);
        \node at (-0.25,1.6) {\tiny$y_k$};
        \draw[thick] (0,1.6) -- (0.05,1.6);
        \draw[dashed,thick] (0,1.8) -- (2.25,1.8);
        \draw[dashed,thick] (3.75,1.25) -- (3.75,0);
        \node at (3.75,-0.25) {\tiny$B$};
        \draw[color=blue,thick] (0,5) -- (0.75,5) -- (0.75,3) -- (2.25,3) -- (2.25,1.8) -- (3.75,1.8) -- (3.75,1.25) -- (5.25,1.25) -- (5.25,0.5) -- (7.5,0.5) -- (7.5,0);
        \node at (-1.5,0) {\ };
        \node at (9.5,0) {\ };
        \end{tikzpicture}
      }

      \subfigure[The case where $d > k$. The area of the gray bars
      below $y_{k}$ is nonzero. Note that the bars below $y_{k}$ may
      extend past $x=B$.]{
        \begin{tikzpicture}
        \draw[color=gray] (0,0) -- (0,6) -- (9,6) -- (9,0) -- (0,0);
        \node at (1.25,4.75) {\smaller$\psi(x)$};
        \node at (4.5,-0.5) {\smaller$x$};
        \foreach \x in {0,...,6}
        {
          \draw[color=gray] (-0.05,\x) -- (0,\x);
          \draw[color=gray] (1.5*\x,-0.05) -- (1.5*\x,0);
        }
        \node at (-0.25,5) {\scriptsize$1$};
        \node at (-0.25,2.6) {\tiny$y_1$};
        \draw[fill,color=lightgray] (0,2.6) -- (1,2.6) -- (1,2.3) -- (0,2.3);
        \draw[fill,color=lightgray] (0,2.3) -- (2,2.3) -- (2,1.9) -- (0,1.9);
        \draw[fill,color=lightgray] (0,1.9) -- (1.7,1.9) -- (1.7,1.6) -- (0,1.6);
        \draw[fill,color=lightgray] (0,1.6) -- (1.5,1.6) -- (1.5,1.2) -- (0,1.2);
        \draw[fill,color=lightgray] (0,1.2) -- (2.4,1.2) -- (2.4,0.85) -- (0,0.85);
        \draw[fill,color=lightgray] (0,0.85) -- (4.5,0.85) -- (4.5,0.5) -- (0,0.5);
        \draw[thick] (0,2.6) -- (1,2.6);
        \node at (-0.25,2.3) {\tiny$y_2$};
        \draw[thick] (0,2.3) -- (2,2.3);
        \node at (-0.25,1.9) {\tiny$y_3$};
        \draw[thick] (0,1.9) -- (1.7,1.9);
        \node at (-0.25,1.6) {\tiny$y_k$};
        \draw[thick] (0,1.6) -- (1.5,1.6);
        \node at (-0.25,1.2) {\tiny$y_5$};
        \draw[thick] (0,1.2) -- (2.4,1.2);
        \node at (-0.25,0.85) {\tiny$y_d$};
        \draw[thick] (0,0.85) -- (4.5,0.85);
        \draw[thick] (0,0.5) -- (0.05,0.5);
        \draw[dashed,thick] (0,1.8) -- (2.25,1.8);
        \draw[dashed,thick] (3.75,1.25) -- (3.75,0);
        \node at (3.75,-0.25) {\tiny$B$};
        \draw[color=blue,thick] (0,5) -- (0.75,5) -- (0.75,3) -- (2.25,3) -- (2.25,1.8) -- (3.75,1.8) -- (3.75,1.25) -- (5.25,1.25) -- (5.25,0.5) -- (7.5,0.5) -- (7.5,0);
        \node at (0,6.5) {\ };
        \node at (-1.5,0) {\ };
        \node at (9.5,0) {\ };
        \end{tikzpicture}
      }
    }
  
  \end{figure}
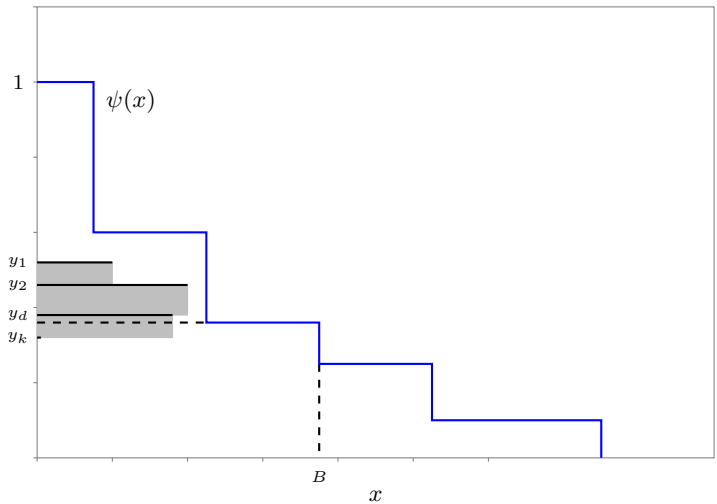
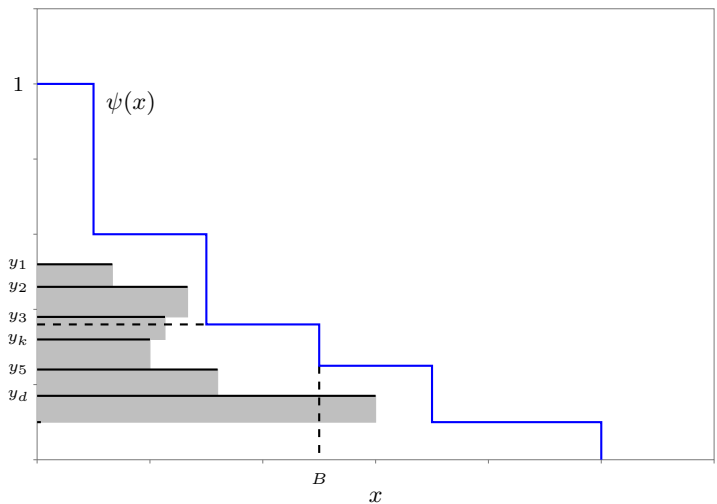
  
The cost of the greedy schedule is $\sum_{j=1}^d R_j \tau_j$.
The quantity $R_j{\tau_j}$ is the contribution of action $j$ to the cost of the greedy schedule.
We can think of this quantity as charging $\tau_j$ per unit of $R_j$.
We can rewrite the contribution by instead dividing the charge per unit of utility change,
$R_j - R_{j+1}$.  That is, we can rewrite $R_j {\tau_j}$ as the product of
  $R_{j}\tau_{j}/\left(R_{j}-R_{j+1}\right)$  and $R_j - R_{j+1}$.
It follows from the definitions that $R_{j}\tau_{j}/\left(R_{j}-R_{j+1}\right) = \frac{R_j}{s_j}$ and
therefore

\begin{equation}
\label{eq:relatetoxy}
x_j(y_j - y_{j+1}) = \frac{1}{4}{R_j}{\tau_j}
\end{equation}

Since $\cost(f,G_{d}) = \sum_{j=1}^d R_j\tau_j$, we now have

\begin{equation}
\label{eq:relatetoxy1}
\frac{1}{4}\cost(f, G_{d}) = \sum_{j=1}^{d} x_j (y_j-y_{j+1})
\end{equation}

The lemma now follows immediately from the following claim:

\begin{claim}
\label{claim:histo}
$\sum_{j=1}^{d} x_j (y_j-y_{j+1}) \leq \int_{x=0}^B \psi(x) dx$.
\end{claim}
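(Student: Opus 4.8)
The plan is to prove Claim~\ref{claim:histo} by the geometric ``fitting rectangles under a curve'' argument suggested by Figure~\ref{fig:graph}. First I would record the elementary facts I rely on: $\psi$ is nonincreasing (because $f$, hence $f^c$, is monotone and $S^{*}_{\langle x\rangle}$ only grows with $x$); the sequence $(y_j)=(R_j/2)$ is nonincreasing; $\psi(x_j)\ge y_j$ for every $j$, which is exactly \eqref{eq:hxjyj}; and, by the definition $k=\min\{j: y_j\le\psi(B)\}$, we have $y_j>\psi(B)$ for all $j<k$ while $y_k\le\psi(B)$, so by monotonicity $\psi(x)\ge\psi(B)\ge y_k$ for every $x\in[0,B]$. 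I would then view the left-hand side $\sum_{j=1}^{d}x_j(y_j-y_{j+1})$ as the total area of the $d$ axis-parallel rectangles $B_j:=[0,x_j]\times[y_{j+1},y_j]$ (the ``bars'' of Figure~\ref{fig:graph}), which have pairwise disjoint interiors because the intervals $[y_{j+1},y_j]$ tile $[y_{d+1},y_1]$.

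Next I would split the bars at height $y_k$: the \emph{tall} bars $B_1,\dots,B_{k-1}$ lie in the strip $\{t\ge y_k\}$, while the \emph{short} bars $B_k,\dots,B_d$ lie in the strip $\{t\le y_k\}$. For a tall bar $B_j$ (so $j<k$), every $(x,t)\in B_j$ satisfies $t\ge y_{j+1}\ge y_k$; moreover $\psi(x_j)\ge y_j>\psi(B)$ forces $x_j<B$ by monotonicity of $\psi$, so $x\le x_j<B$; and $t\le y_j\le\psi(x_j)\le\psi(x)$, again by monotonicity. Hence $\bigcup_{j<k}B_j$ sits inside the region $T:=\{(x,t):0\le x\le B,\ y_k\le t\le\psi(x)\}$, and since the $B_j$ are essentially disjoint, $\sum_{j<k}x_j(y_j-y_{j+1})\le|T|=\int_{0}^{B}(\psi(x)-y_k)\,dx$, where the equality uses $\psi\ge y_k$ on $[0,B]$.

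For the short bars I would not try to fit them under $\psi$ — they can be wider than $B$ — but instead use the identity $x_j(y_j-y_{j+1})=\frac{1}{4} R_j\tau_j$ from \eqref{eq:relatetoxy}. Since $R_j\le R_k=2y_k$ for $j\ge k$, and $\sum_{j=k}^{d}\tau_j\le\ell(G_{d+1})=\ell(G_d)+\tau_d<2B$ (here $\ell(G_d)<B$ by the choice of $d$, and in the instances to which we apply the lemma every item has cost at most $B$, so $\tau_d\le B$), we obtain $\sum_{j=k}^{d}x_j(y_j-y_{j+1})=\frac{1}{4}\sum_{j=k}^{d}R_j\tau_j\le\frac{y_k}{2}\sum_{j=k}^{d}\tau_j\le B\,y_k$; geometrically, the short bars fit into the rectangle $[0,B]\times[0,y_k]$, which is disjoint from $T$ and also lies under $\psi$ on $[0,B]$. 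Adding the two bounds yields $\sum_{j=1}^{d}x_j(y_j-y_{j+1})\le\int_{0}^{B}(\psi(x)-y_k)\,dx+B\,y_k=\int_{0}^{B}\psi(x)\,dx$, which is the claim.

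The step I expect to be the main obstacle is the short-bar bound. Unlike in the proof of Theorem~\ref{thm:streeter}, a short bar can stick out past $x=B$, so it cannot simply be slid under the graph of $\psi$; the essential move is to drop back to the algebraic identity $x_j(y_j-y_{j+1})=\frac{1}{4} R_j\tau_j$ and exploit both $R_j\le R_k$ and the length bound $\ell(G_{d+1})<2B$ coming from $\ell(G_d)<B$. I would also check the degenerate cases $k>d$ (no short bars, so only the first bound is needed, and $y_k\ge0$ makes it even weaker) and $\psi(B)=0$ (then $y_k=0$, and both the short-bar sum and $B\,y_k$ vanish); in all of them the inequalities above degenerate gracefully.
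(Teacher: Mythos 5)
Your proof is correct and takes essentially the same approach as the paper's: the same split of $\int_{0}^{B}\psi(x)\,dx$ at height $y_k$, the same packing of the bars with $j<k$ under the graph of $\psi$ on $[0,B]$ via \eqref{eq:hxjyj} and Fact~\ref{fact:xjB}, and the same algebraic bound $\sum_{j\ge k}x_j(y_j-y_{j+1})=\frac{1}{4}\sum_{j\ge k}R_j\tau_j\le\frac{1}{4}R_k\ell\le By_k$ for the remaining bars, where $\ell$ denotes their total time. The only (immaterial) difference is that you bound this total time by $\ell(G_{d+1})<2B$ (invoking $\tau_d\le B$ from the application), whereas the paper bounds it by $\ell(G_d)<B$; either way the bound fits within the factor-of-two slack in $y_kB=\frac{1}{2}R_kB$.
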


To prove this claim, we note that for each $j$, we have a pair
$(x_{j}, y_{j})$. Figure~\ref{fig:graph} shows two histograms
(represented by gray bars). 
For any
given $j$, we have a gray bar such that the top of the bar is at
$y_{j}$, the bottom is at $y_{j+1}$, and the length of the bar is
$x_{j}$.

  Proving the claim is equivalent to showing
  that the total area of the gray bars does not
  exceed the integral of $\psi(x)$ up to $x=B$. 
  Combining 
  \eqref{eq:hxjyj} with the fact that $\psi$ is
  non-increasing, it follows that for a gray bar extending to a length
  of $x_{j}$, 
  the gray bar has a height no more than $y_{j}$ and thus
  is below the graph of $\psi$. This allows us to conclude that the gray
  bars fit entirely inside of the graph of $\psi$. However, since we are
  integrating $\psi$ only up to $x=B$, there may be some gray bars which,
  although they are within the graph of $\psi$, fall outside of the area
  of integration of $\psi$. These are the values $x_{j}$ such that $x_{j} >
  B$. We note the following important fact:
  \begin{fact}\label{fact:xjB}
    For all $j < k$, $x_j \leq B$. 
  \end{fact}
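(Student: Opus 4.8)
The plan is to split on whether $j \le d$. For $d < j < k$ the definition gives $p_j = 0$, hence $x_j = 0 \le B$ and there is nothing to prove, so assume $j \le d$ and $j < k$. Since $k = \min\{\,j' : y_{j'} \le \psi(B)\,\}$, minimality of $k$ forces $y_j > \psi(B)$, i.e.\ $\frac{R_j}{2} > 1 - f(S^{*}_{\langle B\rangle})$.

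Next I would invoke Fact~1, inequality~\eqref{eq:sfact1}, with the schedule $S^{*}$, the index $j \le d$, and truncation time $t = B$, to obtain $f(S^{*}_{\langle B\rangle}) \le f(G_j) + B\,s_j$. Since $f(G_j) = 1 - R_j$, this rearranges to $\psi(B) = 1 - f(S^{*}_{\langle B\rangle}) \ge R_j - B\,s_j$. Chaining with the previous inequality gives $\frac{R_j}{2} > \psi(B) \ge R_j - B\,s_j$, so $B\,s_j > \frac{R_j}{2}$.

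Finally, because $j \le d$ we have $\ell(G_j) < B \le \ell(G)$, so $G_j$ is a proper prefix of the greedy schedule; in particular greedy has not terminated at step $j$, whence $f(G_{j+1}) > f(G_j)$ and $s_j = \frac{R_j - R_{j+1}}{\tau_j} > 0$. Dividing $B\,s_j > \frac{R_j}{2}$ by $s_j$ yields $B > \frac{R_j}{2 s_j} = \frac{p_j}{2} = x_j$, as claimed. (This also confirms $k$ is well-defined: $R_{j'} \to 0$ as the greedy schedule runs to completion while $\psi(B) \ge 0$, so $y_{j'} \le \psi(B)$ for all sufficiently large $j'$.) The only real content here is instantiating Streeter and Golovin's Fact~1 at the \emph{right} truncation time $t = B$; the remainder is algebra plus the bookkeeping that $s_j > 0$ for $j \le d$ and $x_j = 0$ for $j > d$, so I do not anticipate a genuine obstacle in this step.
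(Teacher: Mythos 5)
Your proof is correct and rests on the same key ingredient as the paper's, namely the Streeter--Golovin inequality~\eqref{eq:sfact1}; the paper applies it at $t=x_j$ (via~\eqref{eq:hxjyj}) and concludes by the contrapositive using that $\psi$ is nonincreasing, whereas you apply it at $t=B$ and rearrange directly, which even yields the strict bound $x_j<B$. The extra bookkeeping you supply ($x_j=0$ for $d<j<k$, and $s_j>0$ for $j\leq d$) is accurate and matches the paper's implicit assumptions.
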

  The justification for this fact is as follows: 
  For any $x_{j} > B$, we
  know that $y_{j} \leq \psi(x_{j})$ from \eqref{eq:hxjyj} and
  $\psi(x_{j}) \leq \psi(B)$ since $\psi$ is nonincreasing. So, since
  $x_{j} > B$ implies that $y_{j} \leq \psi(B)$, we know that
  $y_{j} > \psi(B)$ implies that $x_{j} \leq B$. For all $j < k$, by the
  definition of $k$, we know that $y_{j} > \psi(B)$, and thus
  $x_{j} \leq B$.

  In order to show that the area of the histogram defined by the
  $(x_{j}, y_{j})$ pairs is no larger than the integral up to $x=B$ of
  $\psi(x)$, we will break the integral into two parts:
  \[ \int_{x=0}^B \psi(x)dx = \int_{x=0}^B (\psi(x) - y_k)dx +
    \int_{x=0}^B y_{k}dx \]
  and analyze each part. We note that the first part of the integral
  consists of the area above the line $y=y_{k}$. Above this line, the
  reasoning follows the same reasoning as in \cite{streeterGolovin}:
  Due to \eqref{eq:hxjyj}, we see that each bar is contained entirely
  inside the graph of $\psi$, and since $j < k$, the bar is entirely
  inside the area of integration.

  The second part of the integral consists of the
  area below $y=y_{k}$, where the bars are still inside the graph of
  $\psi$, but may extend past $x=B$ and thus fall outside the area of
  integration. We must use different reasoning to show that the area
  of the bars below $y=y_{k}$ do not exceed the area of $\psi$ below
  $y=y_{k}$ and left of $x=B$.

  Let
  $B' = \ell(G_{d})$. We have $B' = \sum_j \tau_j \geq \sum\limits_{j \geq k} \tau_j$. 
  Therefore, using~\eqref{eq:relatetoxy}, and
the fact that $R_j \leq R_k$ for $j \geq k$,
  \begin{equation}\label{eq:ineq} \sum_{j=k}^d x_j(y_j -
    y_{j+1}) = \sum_{j = k}^d \frac{1}{4}\tau_j R_j \leq \frac{1}{4} R_k B'
\end{equation}

  This holds true even when $d < k$, as in this case the sum is simply $0$.

  Using this fact, combined with the fact that $\psi(x_j) \geq y_j$ for
  $j<d$, we can prove the claim. We have that
  \begin{equation}\label{eq:intineq}
    \int_{x=0}^B y_{k}dx = y_{k}B = \frac{1}{2}R_{k}B > \frac{1}{4}R_{k}B' \geq \sum_{j=k}^d x_j(y_j-y_{j+1})
  \end{equation}
  where the last inequality follows from~\eqref{eq:ineq}. If we look
  once again at Figure~\ref{fig:graph}, we see that for each $j$, we
  have a gray bar with area $x_{j}(y_{j}-y_{j+1})$. From
  \eqref{eq:hxjyj}, we know that the gray bars fit entirely inside
  $\psi(x)$, and so the area of the gray bars
  above $y_{k}$ is not more than the area under $\psi(x)$ and above
  $y_{k}$. That is,
  \begin{equation}\label{eq:ineqaboveyk}
    \int_{x=0}^B(\psi(x)-y_k)dx \geq \sum_{j=1}^{k-1}x_j\left(y_j-y_{j+1}\right)
  \end{equation}
  By using \eqref{eq:intineq} and \eqref{eq:ineqaboveyk}, we now have
  \begin{align}
    \int_{x=0}^{B} \psi(x)\,dx
    &= \int_{x=0}^B \left(\psi(x) - y_k\right)dx + \int_{x=0}^B y_{k}dx \notag\\
    &\geq \sum_{j=1}^{d} x_j\left(y_j-y_{j+1}\right)
  \end{align}
  as desired, thus proving Claim~\ref{claim:histo}. By proving
  Claim~\ref{claim:histo}, we have therefore also proven Property~1,
  and thus Lemma~\ref{lem:sgb}.\jmlrQED

\section{$O(k\log n)$-approximation for Scenario $k$-of-$n$ function evaluation}
\label{sec:rhofork}

%

Let $k \in \{0, \ldots, n\}$ and let $f:\{0,1\}^n \rightarrow \{0,1\}$
be the \textit{Boolean $k$-of-$n$ function} where $f(x) = 1$ iff at
least $k$ bits of $f$ are equal to 1.  To determine the value of this
$f$ on an unknown $a \in \{0,1\}^n$, we need to determine whether $f$
has at least $k$ ones, or at least $n-k+1$ zeros.  There is an elegant
polynomial-time exact algorithm solving the Stochastic BFE problem for Boolean
$k$-of-$n$ functions (cf.~\cite{Salloum79,Salloum84,BenDov81,Changetal90}).  

Here we consider the Scenario BFE problem for 
$k$-of-$n$ functions.
Following techniques used in a reduction of~\cite{deshpandeHellersteinKletenik}
for Stochastic BFE,
we reduce
this problem
to a Scenario SC problem, through the construction of an
appropriate utility function $g$ for the state set $\Gamma = \{0,1\}$.
We obtain $g$ by combining two other functions $g_0$, and $g_1$, with
respective goal values $n-k+1$ and $k$ respectively, using the standard OR
construction described in Section~\ref{sec:defs}.  Function
$g_1:\{0,1,*\}^n \rightarrow \mathbb{Z}_{\geq 0}$ is such that for all
$b \in \{0,1,*\}^n$, $g_1(b) = \min\{k,|\{i \mid b_i = 1\}|\}$.  Similarly,
$g_0(b) = \min\{n-k+1,|\{i \mid b_i = 0\}|\}$. Combining $g_0$ and $g_1$, and their goal
values using the OR construction yields the new function
$g:\{0,1,*\}^n \rightarrow \mathbb{Z}_{\geq 0}$ such that for
$b \in \{0,1,*\}^n$, $g(b) = k(n-k+1) - ((n-k+1)-g_{0}(b))(k-g_1(b))$.
The new goal value is $Q = k(n-k+1)$.  For $b \in \{0,1,*\}^n$,
$g(b) = Q$ iff $b$ either contains at least $(n-k+1)$ 0's or at least
$k$ 1's, and thus determining the value of $f$ on initially unknown
$a$ is equivalent to achieving goal value for $g$.

We now lower bound the value of parameter $\rho$ for this $g$.  For
$b \in \{0,1,*\}^n$ where $g(b) < Q$, and $i$ such that $b_i = *$,
$\Delta_g(b,i,1) \geq (n-k+1 - g_0(b))$ and
$\Delta_g(b,i,0) \geq (k - g_1(b))$.  Thus
$\frac{\Delta_g(b,i,1)}{Q-g(b)} \geq \frac{n-k+1 -
  g_0(b)}{(n-k+1-g_0(b))(k-g_1(b))} = \frac{1}{k-g_1(b)}$
and
$\frac{\Delta_g(b,i,0)}{Q-g(b)} \geq \frac{k - g_1(b)}{(n-k+1 -
  g_0(b))(k-g_1(b))} \geq \frac{1}{k}$.
The larger of these is at least $\frac{1}{k}$, and hence the value of
$\rho$ for $g$ is at least $\frac{1}{k}$.  It follows that running
Mixed Greedy on $g$ with respect to the
sample distribution, gives an $O(k \log n)$ approximation algorithm
for our Scenario Boolean $k$-of-$n$ function evaluation problem.
The bound $O(k \log n)$ has no dependence on the sample size or on the
weights.  For constant $k$, this bound is $O(\log n)$.

Our Scenario $k$-of-$n$ function evaluation problem has
some similarities to the Generalized Min-Sum Set Cover problem, which
has a constant-factor approximation algorithm (see, e.g.,~\cite{skutellaWilliamson}).  However, in
the Generalized Min-Sum Set Cover problem, the goal is to find a {\em
  non-adaptive} strategy of minimum cost.  Further, the sample is
unweighted, and the covering requirements are different for
different assignments in the input sample.


\section{Adaptive Submodularity of $g_{W}$}\label{app:adaptivesubmod}
\begin{proof}\proofof{\lemmaref{lem:gW}}
Let $w(b) = \sum_{a\in S : a\succeq b}w(a)$ be the sum of
the weights of realizations in the sample $S$ that are extensions of
$b$. Then, we can write $h_{W}(b) = W - w(b)$.

The OR construction gives us
\begin{equation*}
  g_{W}(b) = QW - (Q-g(b))(W-h_{W}(b))
\end{equation*}

By the properties of the standard OR construction,
because $g$ and $h_W$ are monotone and submodular, so is
$g_{W}$. 

Let $b, b'\in(\Gamma\cup\{*\})^{n}$ such that $b'\succ b$, and
$i\in N$ where $b_{i}=b'_{i}=*$. To show that $g_{W}$ is adaptive submodular with respect to distribution $\mathcal{D}_{S,w}$, we must show that
$\mathbb{E}[\Delta{g_{W}}(b,i,\gamma)] \geq
\mathbb{E}[\Delta{g_{W}}(b',i,\gamma)]$ with respect to $\mathcal{D}_{S,w}$.

We start by finding $\Delta{g_{W}}(b,i,\gamma)$ for any
$b\in\left(\Gamma\cup \{*\}\right)^{n}$, $\gamma\in\Gamma$, and $i\in N$ such that
$b_{i}=*$:
\begin{equation*}
  \begin{split}
    \Delta{g_{W}}(b,i,\gamma)
    &= QW - \left(Q-g(b_{i\leftarrow \gamma})\right)\left(W-h_{W}(b_{i\leftarrow \gamma})\right) \\
    &\quad- QW + \left(Q -
      g(b)\right)\left(W-h_{W}(b)\right) \\
    &= Qh_{W}(b_{i\leftarrow \gamma}) + Wg(b_{i\leftarrow \gamma}) - g(b_{i\leftarrow \gamma})h_{W}(b_{i\leftarrow \gamma})\\
    &\quad- Qh_{W}(b) - Wg(b) + g(b)h_{W}(b) \\
    &= Q\Delta h_{W}(b,i,\gamma)+W\Delta g(b,i,\gamma)+g(b)h_{W}(b)-g(b_{i\leftarrow\gamma})h_{W}(b_{i\leftarrow\gamma})
  \end{split}
\end{equation*}
By adding and subtracting the same quantity, $g(b)h_{W}(b_{i\leftarrow\gamma})$, to the expression on the last line, we get
\begin{equation*}
  \begin{split}
    \Delta g_{W}(b,i,\gamma) &= Q\Delta h_{W}(b, i, \gamma) + W\Delta g(b, i, \gamma) + g(b)h_{W}(b) - g(b_{i\leftarrow\gamma})h_{W}(b_{i\leftarrow\gamma}) \\
    &\quad+ g(b)h_{W}(b_{i\leftarrow\gamma}) - g(b)h_{W}(b_{i\leftarrow\gamma}) \\
    &= Q\Delta h_{W}(b,i,\gamma) + W\Delta g(b,i,\gamma) \\
    &\quad-g(b)(h_{W}(b_{i\leftarrow\gamma})-h(b))-h_{W}(b_{i\leftarrow\gamma})(g(b_{i\leftarrow\gamma})-g(b))\\
    &= Q\Delta h_{W}(b,i,\gamma) + W\Delta g(b,i,\gamma)\\
    &\quad- \Delta h_{W}(b,i,\gamma)g(b) - \Delta g(b,i,\gamma) h_{W}(b_{i\leftarrow\gamma}) \\
    &= \Delta h_{W}(b,i,\gamma) (Q-g(b)) + \Delta
    g(b,i,\gamma)(W-h_{W}(b_{i\leftarrow\gamma}))
  \end{split}
\end{equation*}
We next recall that, by definition, $h_{W}(b) = W - w(b)$. Thus, we
have that $W - h_{W}(b_{i\leftarrow\gamma})=w(b_{i\leftarrow\gamma})$,
and we can simplify further:
\begin{equation*}
  \Delta g_{W}(b,i,\gamma) = \Delta h_{W}(b,i,\gamma)(Q-g(b)) + \Delta g(b,i,\gamma)w(b_{i\leftarrow\gamma})
\end{equation*}

We define for any partial realization $d$, the function
$\hat{Q}(d) = Q - g(d)$ to represent the amount of utility remaining to be
achieved by $d$. Also, let $U_{\gamma} = \Delta g(b,i,\gamma)$
represent the utility gained in $g$ by observing state $\gamma$ for
item $i$ in partial realization $b$. Let $W_{\gamma} = w(b_{i\leftarrow \gamma})$ represent the
weight of all realizations in $S$ consistent with $b_{i\leftarrow \gamma}$,
referred to as the \textit{total weight of state $\gamma$}. Let
$\overline{W}_{\gamma} = \sum_{\gamma' \neq \gamma} W_{\gamma'}$
represent the total weight of all states which are not $\gamma$. It is
clear that $\Delta h_{W}(b,i,\gamma) = \overline{W}_{\gamma}$. That
is, the change in utility in $h_{W}$ (or conceptually, the amount of
weight \textit{eliminated}) is equal to the total weight of states
which are not the observed state, $\gamma$. We can now substitute
these new values in the above equation and get:
\begin{equation*}
  \Delta {g_{W}}(b,i,\gamma) = \overline{W}_{\gamma}\,\hat{Q}(b) + U_{\gamma}W_{\gamma}
\end{equation*}

We now consider the calculation of the expected value of
$\Delta g_{W}$. For a realization $a\in\Gamma^{n}$ drawn from
$\mathcal{D}_{S,w}$, we have that
$Pr[a_{i}=\gamma \mid a\succeq b] =
\frac{w(b_{i\leftarrow\gamma})}{w(b)}=\frac{W_{\gamma}}{w(b)}$.
Then, the expected increase in utility is
\begin{align*}
  \mathbb{E}[\Delta {g_{W}} (b,i,\gamma)]
  &= \sum_{\gamma} \frac{W_{\gamma}}{w(b)}\Delta {g_{W}}(b,i,\gamma) \\
  &= \sum_{\gamma} \frac{W_{\gamma}}{w(b)}\left(\hat{Q}(b)\overline{W}_{\gamma}+U_{\gamma}W_{\gamma}\right) \\
  &= \frac{\sum_{\gamma} W_{\gamma}\overline{W}_{\gamma}\hat{Q}(b)+U_{\gamma}W^{2}_{\gamma}}{w(b)} \\
  &= \frac{\sum_{\gamma} W_{\gamma}\overline{W}_{\gamma}\hat{Q}(b)+U_{\gamma}W^{2}_{\gamma}}{\sum_{\gamma}W_{\gamma}}
\end{align*}
The last equality is true since $W_{\gamma}=w(b_{i\leftarrow\gamma})$
and the sum of $w(b_{i\leftarrow\gamma})$ for all $\gamma$ is equal to
$w(b)$.

We now consider the partial realization $b'$. The expected value on
partial realization $b'$ is analogous to the above expected value on
$b$:
\begin{equation*}
  \mathbb{E}[\Delta g_{W}(b',i,\gamma)]=\frac{\sum_{\gamma}W'_{\gamma}\overline{W}'_{\gamma}\hat{Q}(b')+U'_{\gamma}{W'}^{2}_{\gamma}}{\sum_{\gamma}W'_{\gamma}}
\end{equation*}
where $W'_{\gamma}=w(b'_{i\leftarrow\gamma})$,
$\overline{W}'_{\gamma}=\sum_{\gamma'\neq\gamma}W'_{\gamma'}$, and
$U'_{\gamma}=\Delta g(b',i,\gamma)$.

Next, let $\mathbf{W}=(W_{\gamma_{1}},W_{\gamma_{2}},\dots)$ be the
tuple containing all of the weights of the possible states with
respect to $b$, and let
$\mathbf{U}=(U_{\gamma_{1}},U_{\gamma_{2}},\dots)$ be the tuple
containing all of the $U_{\gamma}$ values for each of the possible
states. We also let
$\mathbf{W'}=(W'_{\gamma_{1}},W'_{\gamma_{2}},\dots)$ and
$\mathbf{U'}=(U'_{\gamma_{1}},U'_{\gamma_{2}},\dots)$.

It follows from the submodularity of $g$ that $\hat{Q}(b')\leq \hat{Q}(b)$ and
$U'_{\gamma}\leq U_{\gamma}$. Clearly $W'_{\gamma}\leq
W_{\gamma}$.
Finally, since $g$ is monotone, and the maximum value of $g$ on its
domain is $Q$, $U_{\gamma}\leq \hat{Q}(b)$ and $U'_{\gamma}\leq \hat{Q}(b')$.

Now let $r=|\Gamma|$. We will use
$w_{\gamma_{1}},w_{\gamma_{2}},\dots,w_{\gamma_{r}}$ to represent
variables for a new function which we will define. Similarly, we will
use $u_{\gamma_{1}},u_{\gamma_{2}},\dots,u_{\gamma_{r}}$ to represent
variables of the same function. We will also let
$\overline{w}_{\gamma}=\sum_{\gamma'\neq\gamma}w_{\gamma'}$ to
simplify the definition of the function. The $w_{\gamma}$ and
$u_{\gamma}$ variables are analogous to the $W_{\gamma}$ and
$U_{\gamma}$ in the expression for expected value above. We now define
our function $f\colon\mathbb{R}^{2|\Gamma|+1}\to\mathbb{R}$ such that
\begin{equation*}
  f(w_{\gamma_{1}},\dots,w_{\gamma_{r}},u_{\gamma_{1}},\dots,u_{\gamma_{r}},q)=\frac{\sum_{\gamma}q\overline{w}_{\gamma}w_{\gamma}+w^{2}_{\gamma}u_{\gamma}}{\sum_{\gamma}w_{\gamma}}
\end{equation*}

Note that this function is analogous to the formula for expected value
above. Specifically, we consider the point
$(\mathbf{W'},\mathbf{U'},\hat{Q}(b'))$ and the point
$(\mathbf{W},\mathbf{U},\hat{Q}(b))$. It should be noted that
$f(\mathbf{W'},\mathbf{U'},\hat{Q}(b')) = \mathbb{E}[\Delta
g_{W}(b,i,\gamma)]$
and
$f(\mathbf{W},\mathbf{U},\hat{Q}(b))=\mathbb{E}[\Delta
g_{W}(b',i,\gamma)]$.
Let $P$ be the path from the first point to the second point, which
increases $q$ continuously from $\hat{Q}(b')$ to $\hat{Q}(b)$, then
increases each $u_{\gamma_{i}}$ continuously from $U'_{\gamma_{i}}$ to
$U_{\gamma_{i}}$ for $i=1,2,\dots,r$, and finally increases each
$w_{\gamma_{i}}$ continuously from $W'_{\gamma_{i}}$ to
$W_{\gamma_{i}}$ for $i=1,2,\dots,r$. We show that for every point
along the path $P$, the partial derivatives of $f$ are non-negative,
and therefore the value of $f$ is nondecreasing along the path. This
proves that
$f(\mathbf{W},\mathbf{U},\hat{Q}(b))\geq
f(\mathbf{W'},\mathbf{U'},\hat{Q}(b'))$.
This implies that
$\mathbb{E}[\Delta g_{W}(b,i,\gamma)]\geq \mathbb{E}[\Delta
g_{W}(b',i,\gamma)]$,
and thus $g_{W}$ is adaptive submodular with respect to distribution
$\mathcal{D}_{S,w}$.

We let $K=\sum_{\gamma}w_{\gamma}$. Then, we start by taking the
partial derivative with respect to $q$:
\begin{equation*}
  \frac{\partial f}{\partial q} = \frac{\sum_{\gamma}\left(w_{\gamma}\overline{w}_{\gamma}\right)}{K} \geq 0
\end{equation*}
for all points on $P$ since all weights are nonnegative and thus $w_{\gamma}\geq 0$.

We also examine the partial derivative with respect to each
$u_{\gamma}$. Given any $\gamma$, the partial derivative is
\begin{equation*}
  \frac{\partial f}{\partial u_{\gamma}} = \frac{w_{\gamma}^{2}}{K} \geq 0
\end{equation*}
because $K$ is positive since all weights are nonnegative
(i.e.~$w_{\gamma} \geq 0$ for all $w_{\gamma}$).

Finally, for each $w_{\gamma}$, we will use the fact that
$\overline{w}_{\gamma}=\sum_{\gamma'\neq\gamma}w_{\gamma'}$. This
means that for any $\gamma$, we can express the sum of all weights as
$K=w_{\gamma}+\sum_{\gamma'\neq\gamma}w_{\gamma'}=w_{\gamma}+\overline{w}_{\gamma}$. This
fact is used several times in the following. We have
\begin{equation*}
  \begin{split}
    \frac{\partial f}{\partial w_{\gamma}} &= \frac{\left(\overline{w}_{\gamma}q+2w_{\gamma}u_{\gamma}+\sum\limits_{\gamma'\neq \gamma}w_{\gamma'}q\right)K - \left(\sum\limits_{\gamma'}\left[w_{\gamma'}\overline{w}_{\gamma'}q+w_{\gamma'}^{2}u_{\gamma'}\right]\right)}{K^{2}} \\
    &= \frac{\left(\overline{w}_{\gamma}q+2w_{\gamma}u_{\gamma}+\overline{w}_{\gamma}q\right)\left(w_{\gamma}+\overline{w}_{\gamma}\right) - \sum\limits_{\gamma'}\left(w_{\gamma'}\overline{w}_{\gamma'}q+w_{\gamma'}^{2}u_{\gamma'}\right)}{K^{2}} \\
    &=
    \frac{\left(2\overline{w}_{\gamma}w_{\gamma}q+2w_{\gamma}^{2}u_{\gamma}+2\overline{w}_{\gamma}^{2}q+2w_{\gamma}\overline{w}_{\gamma}u_{\gamma}\right)-\sum\limits_{\gamma'}\left(w_{\gamma'}\overline{w}_{\gamma'}q+w_{\gamma'}^{2}u_{\gamma'}\right)}{K^{2}}
  \end{split}
\end{equation*}
In the summation in the numerator, we look at the term for which $\gamma'=\gamma$ and we can
simplify the numerator:
\begin{equation*}
  \begin{split}
    \frac{\partial f}{\partial w_{\gamma}} &= \frac{w_{\gamma}\overline{w}_{\gamma}q+w_{\gamma}^{2}u_{\gamma}+2\overline{w}_{\gamma}^{2}q+2w_{\gamma}\overline{w}_{\gamma}u_{\gamma} - \sum\limits_{\gamma'\neq\gamma}\left(w_{\gamma'}\overline{w}_{\gamma'}q+w_{\gamma'}^{2}u_{\gamma'}\right)}{K^{2}}
  \end{split}
\end{equation*}
Then, we can find a lower bound on this expression for all points on
$P$. We note that initially, $u_{\gamma} \leq q$ since
$U_{\gamma}\leq \hat{Q}(b')$ for all $\gamma$. We first increase $q$
continuously to $\hat{Q}(b)$. Then we increase each $u_{\gamma}$
continuously from $U'_{\gamma}$ to $U_{\gamma}$. We also note that
$U'_{\gamma}\leq\hat{Q}(b)$, and so after we have increased each
$u_{\gamma}$ we still have that $u_{\gamma} \leq q$. So at all points
on the path we have that $u_{\gamma} \leq q$, and we can replace in the summation in the numerator each $u_{\gamma'}$ by $q$ to produce our lower bound:
\begin{equation*}
  \begin{split}
    \frac{\partial f}{\partial w_{\gamma}} &\geq \frac{w_{\gamma}\overline{w}_{\gamma}q+w_{\gamma}^{2}u_{\gamma}+2\overline{w}_{\gamma}^{2}q+2w_{\gamma}\overline{w}_{\gamma}u_{\gamma} - \sum\limits_{\gamma'\neq\gamma}\left(w_{\gamma'}\left(\overline{w}_{\gamma'}q+w_{\gamma'}q\right)\right)}{K^{2}} \\
    &= \frac{w_{\gamma}\overline{w}_{\gamma}q+w_{\gamma}^{2}u_{\gamma}+2\overline{w}_{\gamma}^{2}q+2w_{\gamma}\overline{w}_{\gamma}u_{\gamma} - \sum\limits_{\gamma'\neq\gamma}\left(qw_{\gamma'}\left(\overline{w}_{\gamma'}+w_{\gamma'}\right)\right)}{K^{2}} \\
    &= \frac{w_{\gamma}\overline{w}_{\gamma}q+w_{\gamma}^{2}u_{\gamma}+2\overline{w}_{\gamma}^{2}q+2w_{\gamma}\overline{w}_{\gamma}u_{\gamma} - q\sum\limits_{\gamma'\neq\gamma}\left(w_{\gamma'}K\right)}{K^{2}} \\
    &=
    \frac{w_{\gamma}\overline{w}_{\gamma}q+w_{\gamma}^{2}u_{\gamma}+2\overline{w}_{\gamma}^{2}q+2w_{\gamma}\overline{w}_{\gamma}u_{\gamma}
      -
      qK\sum\limits_{\gamma'\neq\gamma}\left(w_{\gamma'}\right)}{K^{2}}
  \end{split}
\end{equation*}
Then we note that, by definition,
$\overline{w}_{\gamma} = \sum_{\gamma'\neq\gamma}w_{\gamma'}$, and
simplify further:
\begin{equation*}
  \begin{split}
    &= \frac{w_{\gamma}\overline{w}_{\gamma}q+w_{\gamma}^{2}u_{\gamma}+2\overline{w}_{\gamma}^{2}q+2w_{\gamma}\overline{w}_{\gamma}u_{\gamma} - qK\overline{w}_{\gamma}}{K^{2}} \\
    &= \frac{\overline{w}_{\gamma}q\left(\overline{w}_{\gamma}+w_{\gamma}\right)+w_{\gamma}^{2}u_{\gamma}+\overline{w}_{\gamma}^{2}q+2w_{\gamma}\overline{w}_{\gamma}u_{\gamma} - qK\overline{w}_{\gamma}}{K^{2}} \\
    &= \frac{\overline{w}_{\gamma}qW+w_{\gamma}^{2}u_{\gamma}+\overline{w}_{\gamma}^{2}q+2w_{\gamma}\overline{w}_{\gamma}u_{\gamma} - qK\overline{w}_{\gamma}}{K^{2}} \\
    &= \frac{w_{\gamma}^{2}u_{\gamma}+\overline{w}_{\gamma}^{2}q+2w_{\gamma}\overline{w}_{\gamma}u_{\gamma}}{K^{2}} \\
    &\geq 0
  \end{split}
\end{equation*}
for all points on $P$ because $w_{\gamma}$ and $u_{\gamma}$ are nonnegative on $P$.

Thus, $f$ is nondecreasing along path $P$, and $g_{W}$ is adaptive
submodular with respect to the distribution $\mathcal{D}_{S,w}$.
\end{proof}
\end{document}